\numberwithin{equation}{section}
\newcommand{\supp}{\operatorname{supp}}
\newcommand{\opm}{\operatorname{op}}
\newcommand{\Aop}{\operatorname{\cal A}}
\newcommand{\Pop}{\operatorname{\cal P}}
\newcommand{\Gop}{\operatorname{\cal G}}
\newcommand{\Kop}{\operatorname{\cal K}}
\newcommand{\Iop}{\operatorname{\cal I}}
\newcommand{\Qocc}{\operatorname{Q_{\mbox{occ}}}}
\newcommand{\Qvirt}{\operatorname{Q_{\mbox{virt}}}}
\begin{document}
\newtheorem{assumption}{Assumption}
\newtheorem{proposition}{Proposition}
\newtheorem{definition}{Definition}
\newtheorem{lemma}{Lemma}
\newtheorem{theorem}{Theorem}
\newtheorem{observation}{Observation}
\newtheorem{remark}{Remark}
\newtheorem{corollary}{Corollary}
\newtheorem{example}{Example}

\title{Small distance behavior of one-particle Green's functions in electronic structure theory}

\author{Heinz-J{\"u}rgen Flad$^\dag$ and Michael Griebel$^{\dag \ddag}$
\ \\
$^\dag${\small Institut f\"ur Numerische Simulation, Universit\"at Bonn,
Friedrich-Hirzebruch-Allee 7 53115 Bonn}\\
$^\ddag${\small Fraunhofer Institut f\"ur Algorithmen und Wissenschaftliches Rechnen SCAI,}\\ 
{\small Schloss Birlinghoven 1, 53757 Sankt Augustin}
}

\maketitle

\begin{abstract}
Within the framework of many-particle perturbation theory, we develop an analytical approach that allows us to determine the small distance behavior of Green's functions and related quantities in electronic structure theory. As a case study, we consider the one-particle Green's function up to 2nd order in the perturbation approach. We derive explicit expressions for the leading order terms of the asymptotic small distance behavior. In particular, we demonstrate the appearance of a logarithmic term in the corresponding 2nd order Feynman diagrams.  Our asymptotic analysis leads to an improved classification scheme for the diagrams, which takes into account not only the perturbation order, but also the asymptotic smoothness properties near their diagonals. Such a classification may be useful in the design of numerical algorithms and helps to improve their efficiency.

\end{abstract}

\section{Introduction}
Computational methods in electronic structure theory are now at a mature stage, allowing simulations ranging from atoms and molecules to macromolecules and solids. Depending on the size of the system and the property under consideration, a large variety of many-particle models and sophisticated algorithms for their simulation are available, see e.g. \cite{Fulde,HJO} and the references cited therein. While a systematic control of the modeling errors seems illusory, a systematic control of the discretization errors should be mandatory. In practical applications, however, this topic is often poorly treated, for obvious reasons:
In general, many-particle models are highly nonlinear, and the discretization is done with respect to basis sets, such as the popular and successful Gaussian-type basis functions in quantum chemistry, which are neither stable nor systematically refinable from a mathematical point of view. Rather, they belong to the realm of approximate approximation theory \cite{MS07}. Moreover, most of the quantities of interest, such as many-particle wavefunctions, density matrices, Green's functions, and response functions, have a complicated singular structure due to the underlying singular Coulomb interactions, which makes it difficult to systematically improve the discretization error. With the exception of wavefunctions, which are eigenfunctions of multiparticle Hamilton operators, and the corresponding single-particle-densities, see e.g. \cite{FFHS20,FHO2S07,FHO2S09,FS21,Yserentant}, there is no asymptotic analysis of the underlying singular behavior of these quantities.   
As a consequence, rigorous results on convergence rates and error estimates are largely lacking in the literature. Even for systematic basis functions, such as plane waves, wavelets, or finite elements, the size of the basis set is often determined by the computational cost of the problem under consideration rather than by the requirement to achieve a certain prescribed size of discretization error. Except for pure mean-field models such as Hartree-Fock and density functional theory,\footnote{In these models, it is only the possibly singular Coulomb potentials of the nuclei that give rise to point-like conical singularities. Such types of singularities are efficiently treated by Gaussian-type basis functions and can be adaptively treated by systematically refined local basis functions such as finite elements or wavelets.} many-particle models generally suffer from edge singularities and higher-order corner singularities, which severely limit the convergence rates that can be achieved by systematic basis sets \cite{Yserentant}. Even the use of adaptive discretization schemes, e.g.~by best $N$-term approximation, does not overcome this obstacle \cite{FHS07}. The only promising approaches are either to incorporate emerging singular terms into the basis set or to remove singularities from the model by an appropriate subtraction scheme. The first approach is found for example in explicitly correlated many-particle methods, such as in coupled cluster theory, which include the leading order singular term of the wavefunction in the basis set. In practice, such an approach works surprisingly well and produces the most accurate results for molecules of moderate size \cite{KMTV06} that are currently available.
The second approach is related to renormalization schemes commonly used in quantum field theory.
The basic idea is to subtract singular asymptotic terms from the quantities under consideration, to determine them individually, and to sum them up in renormalized coupling constants or local corrections.
An example is range separation methods, where the singular long-range Coulomb potential is split into a singular short-range and a smooth long-range part.
For both approaches it is necessary to identify occurring singular parts before the actual simulation. This is a largely unexplored territory,  especially for time- or frequency-dependent quantities. 

The purpose of this work is to prepare the ground for a systematic asymptotic analysis of the small disctance behavior of time- and frequency-dependent quantities in electronic structure theory, such as the Green's functions and related response functions. This topic has a long tradition in quantum field theory and is related to issues such as ultra-violet divergencies and corresponding renormalization schemes. These make heavy use of space-time symmetries, such as translational invariance, which is reflected in the fact that most of the work is done in momentum space. In electronic structure theory, however, spatial symmetries are often broken by external potentials. Therefore, we rely instead on a pseudo-differential calculus, which can be seen as a compromise between localization in momentum and configuration space. 
The mathematical machinery we use to study the asymptotic behavior of Green's functions and related quantities  is  borrowed from spectral theory, microlocal analysis, and in particular singular analysis. 
Our primary tool is Schulze's singular pseudo-differential calculus and corresponding function spaces, which encompass the whole hierarchy of Coulomb singularities that can arise in many-particle Coulomb systems.
To illustrate our approach and the techniques involved, in this article we consider the asymptotic small distance behavior of one-particle Green's functions up to second order in perturbation theory.    

This paper is structured as follows: After some introductory remarks on Green's functions in quantum many-particle theory, we discuss in Section \ref{spectralresolution} our approach of translating back from formal spectral resolutions of operator resolvents to a formalism in the framework of pseudo-differential operator algebras. This is an unavoidable first step in our asymptotic analysis, due to the fact that the second quantization, commonly used in quantum many-particle theory, is based on formal spectral resolutions in terms of eigenfunctions of appropriately chosen one-particle Hamilton operators. It requires sophisticated techniques, such as the concept of generalized eigenfunctions, to  deal with these spectral resolutions within a rigorous mathematical framework. 
 
In the main part of the paper, we apply the aforementioned technique and study the small distance behavior of one-particle Green's functions for many-particle systems interacting via singular Coulomb interactions.
After a brief outline of some basic concepts and techniques of singular analysis in Section \ref{PsiDO}, we discuss how to apply Agmon's limiting absorption principle in the framework of singular analysis. In particular, this allows us to consider certain limits of complex resolvents required for Green's functions in quantum many-particle theory. This is followed by a detailed discussion of various types of Feynman diagrams from the point of view of singular analysis, which contribute to  one-particle Green's functions.
Our asymptotic analysis is explicitly carried out only up to second order in perturbation theory, but can be extended to higher orders with obvious modifications.
It should be noted that our treatment only considers the asymptotic small-distance behavior of one-particle Green's functions and does not deal with convergence issues of perturbative expansions. The main objective of the present work is the development of tools that can be used to obtain the leading asymptotic terms in the small-distance behavior for a large class of Feynman diagrams.   
They will then be employed to reveal possible connections between their perturbative order and their singular behavior, and may help to improve the performance of numerical approximation schemes for many-particle Green's functions. This will be the subject of a future publication.     

From a functional analytic point of view, Feynman diagrams have a rather complicated structure,  reflecting the highly multilinear character of the corresponding spectral resolutions. This requires some modifications of the standard singular calculus based on algebras of linear differential and pseudo-differential operators. In particular, the appearance of locally averaged partial traces of resolvents complicates the asymptotic analysis. The exact meaning of this notion will be discussed in Section \ref{singularbehavior}. Suffice it to say here that local averages of singular kernel functions and corresponding traces are considered in the theory of Hilbert-Schmidt operators, cf.~\cite{B88}, for further details. The novelty in our particular application is that the averages are restricted to subdomains of codimension greater than one.      
At the level of second-order perturbation theory, these locally averaged partial traces lead to a logarithmic term in the asymptotic expansion of, for example, one-particle Green's functions. We illustrate this particular aspect with some analytical
calculations for corresponding model problems.
 
In the final Section \ref{asymptoticsmoothness}
we put our results into a broader perspective and briefly outline some further developments.

\section{Many-particle models in electronic structure theory}
Ground states and excited states in atoms, molecules and solids have been successfully simulated by a variety of many-particle models based on {\em first principles} derived from quantum theory. Let us mention only the {(time-dependent) density functional theory} ((TD)DFT), density matrix functionals and many-particle perturbation theory based on Green's functions, such as the GW approximation.
There are several other, so-called {ab initio} approaches, cf.~\cite{HJO} for a comprehensive discussion, which are directly based on the many-particle wavefunction, such as {configuration interaction} (CI), {multi configuration self consistent field} (MCSCF), {multi reference configuration interaction} (MRCI), and {coupled cluster} (CC) theory \cite{FHS15,FFH17,FFH21}, to name just a few popular acronyms. 

Despite their suggestive name, {first principle} models often lack, at least from a mathematical point of view, a clean and transparent derivation from their common origin, the many-particle Schr\"odinger equation. 
Therefore, systematic improvements for many popular {first principle} models are difficult to obtain. One possible remedy is to exploit the interrelationships between these models, which allow the transfer of insights from one model to another. It is well known that the {ab initio} CC, (TD)DFT and density matrix models have many interrelationships based on a common background in Green's function many-particle theory, cf.~\cite{NS92,ORR02}, on which we will focus on in the following.
For the improvement of existing and the development of new {first principle} models, it turns out to be advantageous to have knowledge of the analytic properties
of many-particle Green's functions and related quantities, independent of specific discretization schemes.
A common feature of all models in electronic structure theory is the Coulomb potential, which represents the fundamental interaction between particles. Due to its singular nature, all quantities involved in electronic structure theory, such as wavefunctions, density matrices, Green's functions, and response functions, have a certain singular structure in common. 
It manifests in a specific asymptotic behavior of these quantities near the coalescence points of particles.

\subsection{Some basic reduced quantities of many-particle models} 
\label{basicquantities}
The common origin of many-particle models considered in electronic structure theory is the Schr\"o\-din\-ger equation,
which has as solutions the wavefunctions $\Psi(x_1,\ldots,x_N,t)$ of an $N$-particle system. Because of the 
high dimensionality of wavefunctions it is often advantageous to consider a reduced quantity that depends only on $n << N$ particles and describes the action of these particles in the mean-field generated by the other particles. The most prominent examples in many-particle theory are Green's functions, with one- and two-particle Green's functions, in terms of Dirac's bra-ket notation, given by
\begin{equation}
 G_1(x_1,t_1|x_2,t_2) = -i \langle \Psi_0 | T \{ \psi(x_1,t_1) \psi^\dagger(x_2,t_2) \} | \Psi_0 \rangle,
\label{G1}
\end{equation}
\begin{equation}
 G_2(x_1,t_1,x_2,t_2|x_3,t_3,x_4,t_4) = - \langle \Psi_0 | T \{\psi(x_1,t_1) \psi(x_2,t_2) \psi^\dagger(x_4,t_4) \psi^\dagger(x_3,t_3) \}  | \Psi_0 \rangle,
\label{G2}
\end{equation}
where $\Psi_0$ denotes the stationary ground state wavefunction of the $N$-particle system,\footnote{Here we assume that the underlying Hamiltonian is time-independent.} $\psi^\dagger,\psi$ are field creation and annihilation operators in the Heisenberg picture\footnote{In order separate coordinates refering to anihilation and creation operators as arguments of Green's functions, we use the notation $G(\cdot | \cdot )$ such that coordinates of anihilation operators appear to the left and coordinates of creation operators appear to the right of the bar, respectively.} and $T \{ \cdots \}$ denotes the time-ordering of the operators involved. Therefore, a one-particle Green's function can be interpreted for $t_1 >t_2$ as the probability amplitude of a process where a particle is added to the ground state in $x_2$ at time  $t_2$ and a particle is annihilated in  $x_1$ at time  $t_1$, or vice versa if $t_2 >t_1$. 
Similarly, a two-particle Green's function describes the effect on the ground state of a pair of particles or holes that are created and then annihilated.

Alternatively, the one-particle Green's function $G_1$ can be viewed as the fundamental solution of a Schr\"odinger type equation, i.e.,
\begin{equation}
 \bigl( i\partial_{t_1} -{\mathfrak h}_1 \bigr)G_1(x_1,t_1|x_2,t_2) -\int \Sigma(x_1,t_1,x_3,t_3) G_1(x_3,t_3|x_2,t_2) dx_3 dt_3
 = \delta(x_1 -x_2) \delta(t_1-t_2) ,
\label{SchrG}
\end{equation}
where all many-particle effects are represented by the self energy $\Sigma$ and ${\mathfrak h}_1$ is the one-particle Hamiltonian associated to the coordinates of the first particle.
This equation reduces in the non-interacting case, i.e., when the self energy $\Sigma$ is set to zero, to
\begin{equation}
 \bigl( i\partial_{t_1} -{\mathfrak h}_1 \bigr)G^{(0)}_1(x_1,t_1|x_2,t_2) = \delta(x_1 -x_2) \delta(t_1-t_2).
\label{SchrG0}
\end{equation}
It is often convenient to perform a partial Fourier transform with respect to the time variable, i.e.
\begin{equation}
 G_1(x_1,x_2;\omega) = \int_{-\infty}^{\infty} e^{it\omega} G_1(x_1,t_1+t,x_2,t_1) \, dt .
\label{G1omega}
\end{equation}
After such a partial Fourier transform, (\ref{SchrG}) and (\ref{SchrG0}) become
\begin{equation}
 \bigl( \omega -{\mathfrak h}_1 \bigr)G_1(x_1,x_2;\omega) -\int \Sigma(x_1,x_3;\omega) G_1(x_3,x_2;\omega) dx_3
 = \delta^3(x_1 -x_2) ,
\label{SchrGomega}
\end{equation}
\begin{equation}
 \bigl( \omega -{\mathfrak h}_1 \bigr)G^{(0)}_1(x_1,x_2;\omega)
 = \delta(x_1 -x_2) .
\label{SchrG0omega}
\end{equation}

Furthermore, Green's functions satisfy Dyson-type integral equations, e.g.
\begin{multline}
 G_1(x_1,t_1|x_2,t_2) = G_1^{(0)}(x_1,t_1|x_2,t_2) \\ + \int G_1^{(0)}(x_1,t_1|x_3,t_3) \Sigma (x_3,t_3,x_4,t_4) G_1(x_4,t_4|x_2,t_2) \, dx_3 dx_4 dt_3 dt_4 .
 \label{DysonG1}
\end{multline}
While an analytic expression can be given for $G_1^{(0)}$, no such expression exists for the self energy. The latter is defined either via many-body perturbation theory, or Dyson's equation itself serves as a definition, given the definition
(\ref{G1}) of the interacting Green's function. In this sense $G_1$ and $\Sigma$ should be considered as dual quantities related by a Dyson equation.

Green's functions are related to other reduced quantities such as densities
\begin{equation}
 \rho(x,t) = -i \lim_{\tau \rightarrow 0_+} G_1(x,t|x,t+\tau) ,
\label{G1rho}
\end{equation}
or reduced $n$-particle density matrices, such as 
\begin{equation}
 \gamma(x_1,x_2,t) = -i \lim_{\tau \rightarrow 0_+} G_1(x_1,t|x_2,t+\tau) ,
\label{G1gamma}
\end{equation}
where both derived quantities can themselves serve as basic reduced quantities in many-particle models, such as in the framework of density and density-matrix functional theory.

\subsection{Spectral resolutions and their translation back to operator algebras}
\label{spectralresolution}
Modern techniques in quantum many-particle theory rely heavily on the formalism of second quantization. This formalism is based on spectral resolutions with respect to an illusory complete single-particle basis and corresponding multilinear representations of the quantities of interest. First, it provides powerful techniques for formal manipulations that take into account the bosonic or fermionic character of the particles involved. Furthermore, the second quantization provides multilinear expressions for the quantities of interest, which can be used in numerical simulations. In fact, it is hard to imagine how to avoid the formalism of second quantization altogether, especially within the diagrammatic approach that is considered the backbone of many-particle theory.
But besides the lack of mathematical rigor with respect to the underlying spectral resolution, such an approach hides the asymptotic information we are interested in.
Therefore, it is necessary to translate second quantized expressions back to an operator formalism that is convenient for our purposes\footnote{See \cite{CGS16} for an alternative analytical approach to deal with the second quantized formalism.}. Specifically, we have to deal with sums over virtual orbitals and energy denominators of the form 
\[
 \frac{1}{\varepsilon_a+\varepsilon_b-\varepsilon_i-\varepsilon_j}, \quad \lim_{\eta \searrow 0} \frac{1}{\varepsilon_a-\varepsilon_i-\omega \pm i\eta} ,
\]
where we adopt the common convention that sums with indices $a,b,c,\ldots$ and $i,j,k,\ldots$ pass over unoccupied and occupied orbitals, respectively. 
Let us develop a vocabulary for our translation of expressions derived from second quantization back to the operator formalism. It is a tacit agreement in the quantum many-particle literature that formal discrete sums actually represent generalized eigenfunction expansions in an infinite-dimensional one-particle Hilbert space ${\cal H}_1$,\footnote{The canonical choice for ${\cal H}_1$ is the function space $L^2(\mathbb{R}^3) \otimes \sigma$, where $\sigma$ denotes the discrete space of spin degrees of freedom. For notational simplicity, we suppress the spin degrees in the following.} with the corresponding one-particle Hamiltonian $\mathfrak{h}$, which has a countable discrete spectrum $\varepsilon_1 < \varepsilon_2 < \ldots < 0$ and a continuous spectrum $[0,\infty)$, here assuming the absence of a singular part of the spectrum. To get a complete one-particle basis in ${\cal H}_1$, it is therefore necessary to complement the eigenfunctions $\phi_{\alpha}$, which belong to the discrete eigenvalues $\varepsilon_{\alpha}$, $\alpha=1,2,\ldots$, by generalized eigenfunctions $\phi_{\lambda}$, with continuous index $\lambda \in [0,\infty)$, which represent the continuous part of the spectrum. To simplify the notation, it is common practice in the physics literature to deliberately choose summation signs and integral symbols to refer to spectral decompositions with respect to such a basis, cf.~\cite{FM68} [footnote p.~18]. 
Furthermore, in the present work we assume a spectral gap $\Delta_{ov}>0$ between occupied and unoccupied one-particle states in the system.  
We introduce the canonical orthogonal projection operators $\Qocc$ and $\Qvirt$, which project to the subspace ${\cal H}_{\mbox{occ}}:= \mbox{span}\{\phi_i\}_{i=1,n}$ spanned by occupied orbitals and its orthogonal complement ${\cal H}_{\mbox{virt}} :={\cal H}_{\mbox{occ}}^{\perp}$ in ${\cal H}_1$, respectively. The one-particle Hamiltonian $\mathfrak{h}$ represents a semi-bounded self-adjoint operator defined on a dense subspace ${\cal D}_1 \subset {\cal H}_1$ and has a natural decomposition with respect to the orthogonal subspaces ${\cal H}_{\mbox{occ}}$, ${\cal H}_{\mbox{virt}}$ via
\[
 \mathfrak{h} = \Qocc \mathfrak{h} \Qocc +\Qvirt \mathfrak{h} \Qvirt .
\]

To llustrate our approach, consider a typical two-particle term of the form 
\[
 \kappa(x,\tilde{x}) := \sum_{a,b,i,j} \frac{\phi_{i}(x) \phi_{a}(x) \phi_{j}(\tilde{x}) \phi_{b}(\tilde{x})}{\varepsilon_a+\varepsilon_b-\varepsilon_i-\varepsilon_j} F_{abij}
\]
with 
\[
 F_{abij} := \iint \phi_{i}(y) \phi_{a}(y) \phi_{j}(\tilde{y}) \phi_{b}(\tilde{y}) f(y,\tilde{y}) dyd\tilde{y} 
\]
for some given function $f(y,\tilde{y})$.
In operator form, this expression can be written as
\[
 \kappa(x,\tilde{x}) := \sum_{i,j} \phi_{i}(x) \otimes \phi_{j}(\tilde{x}) S_{ij} \bigl( \phi_{i}(x) \otimes \phi_{j}(\tilde{x}) f(x,\tilde{x}) \bigr)  .
\]
with
\begin{equation}
 S_{ij} := \Qvirt \otimes \Qvirt \frac{1}{{\mathfrak h}_1 +{\mathfrak h}_2 -\varepsilon_i-\varepsilon_j}  \Qvirt \otimes \Qvirt ,
\label{Sij}
\end{equation}
which is a bounded operator in ${\cal L}({\cal H}_1 \otimes {\cal H}_1)$, due to the gap $\Delta_{ov}$ in the spectrum of ${\mathfrak h}$. 
Here, ${\mathfrak h}_1$ and ${\mathfrak h}_2$ denote the one-particle Hamiltonian with respect to the coordinates of the first and second particle, respectively.
Consider the following partial differential equation
\begin{equation}
 \bigl( {\mathfrak h}_1 +{\mathfrak h}_2 -\varepsilon_i-\varepsilon_j \bigr) \tau_{ij}(x,\tilde{x}) = \phi_{i}(x) \otimes \phi_{j}(\tilde{x}) f(x,\tilde{x}) ,
\label{tau1}
\end{equation}
for the unknown $\tau_{ij}$ 
with generic singular right hand side, e.g.
\[
 f(x,\tilde{x}) =\frac{1}{|x-\tilde{x}|} ,
\]
which is treated in the framework of singular analysis below.
Here we just want to mention that the partial differential operator on the left hand side of (\ref{tau1}) represents a semi-bounded self-adjoint operator defined on a dense subspace ${\cal D}_2 \subset {\cal H}_1 \otimes {\cal H}_1$ and the right hand side belongs to ${\cal H}_1 \otimes {\cal H}_1$.\footnote{This can be easily seen by explicitly taking ${\cal H}_1 := L^2(\mathbb{R}^3)$, which gives ${\cal H}_1 \otimes {\cal H}_1 =L^2(\mathbb{R}^3 \times \mathbb{R}^3)$.}

Multiplying (\ref{tau1}) from the left with $\Qvirt \otimes \Qvirt$ we get
\begin{equation}
 H_{ij} \tau_{ij}(x,\tilde{x}) = \Qvirt \otimes \Qvirt \bigl( \phi_{i}(x) \otimes \phi_{j}(\tilde{x}) f(x,\tilde{x}) \bigr) ,
\label{tau2}
\end{equation}
with
\[
 H_{ij} := \Qvirt \otimes \Qvirt \bigl( {\mathfrak h}_1 +{\mathfrak h}_2 -\varepsilon_i-\varepsilon_j \bigr) \Qvirt \otimes \Qvirt ,
\]
which satisfies the identity
\begin{equation}
 S_{ij} H_{ij} = \Qvirt \otimes \Qvirt ,
\label{SijHij}
\end{equation}
on ${\cal H}_1 \otimes {\cal H}_1$. Applying (\ref{SijHij}) to (\ref{tau2}), we get 
\[
 \Qvirt \otimes \Qvirt \tau_{ij}(x,\tilde{x}) = S_{ij} \bigl( \phi_{i}(x) \otimes \phi_{j}(\tilde{x}) f(x,\tilde{x}) \bigr) ,
\]
and finally
\begin{equation}
 \kappa(x,\tilde{x}) := \sum_{i,j} \phi_{i}(x) \otimes \phi_{j}(\tilde{x}) \Qvirt \otimes \Qvirt \tau_{ij}(x,\tilde{x}) .
\label{Sij->tauij}
\end{equation}
In summary, the central point of our approach is equation (\ref{tau1}), which allows us to extract the asymptotic small distance behavior of $\tau_{ij}$ and thus, via (\ref{Sij->tauij}), of $\kappa$, using techniques from singular analysis to be discussed in detail below.

\section{Singular analysis, the limiting absorption principle and pseudo-differential calculus}
\label{PsiDO}
In order to obtain the asymptotic small distance behavior of a given quantity, we have to identify the corresponding singular {partial differential equation} (PDE), like (\ref{tau1}), whose solution, after some intermediate steps, finally provides the desired asymptotic information. 
In general, such a PDE corresponds to a many-particle problem, which cannot be solved analytically. Instead of searching for an explicit solution, we are content with constructing a corresponding parametrix. Knowledge of a parametrix and its corresponding remainder, the so-called Green operator\footnote{The notion of a Green operator should not be confused with that of a Green's function. The latter is actually similar to a parametrix, cf.~\cite{FFH23} for further details.} turns out to be sufficent to obtain the singular parts of the asymptotic small-distance behavior of a solution of a singular PDE. The concept of a parametrix is non-standard in many-particle theory. Therefore, we want to briefly outline the underlying idea and an essential feature of it in singular analysis. A parametrix can be thought of as a pseudo-inverse\footnote{A related notion in linear algebra is the Moore-Penrose pseudo-inverse of a possibly non-invertible matrix.} of an elliptic partial differential operator $\Aop$. Ellipticity includes the Fredholm property, which means that the operator has a finite dimensional kernel and cokernel. Under such a premise, the existence of a left (right) parametrix $\Pop$ can be proved that satisfies the equation    
\[
 \Pop \Aop = \Iop +\Kop_l \quad \quad (\Aop \Pop = I +\Kop_r) ,
\]
respectively. The parametrix $\Pop$ can be represented as a pseudo-differential operator and provides an inverse modulo the compact operators $\Kop_l$, ($\Kop_r$). In the standard pseudo-differential calculus on smooth manifolds, these compact operators are smoothing operators that do not encode any specific asymptotic information. In contrast, the singular calculus applies to singular spaces with conical, edge, and corner singularities. In the singular case, the compact remainders are called Green operators $\Gop_l$, ($\Gop_r$), which now encode important asymptotic information. As an eplicit example, consider the action of a parametrix $\Pop$ on an equation of the type $\Aop u=f$ with possibly singular elliptic operator $\Aop$ and right hand side $f$. By left-multiplication with $\Pop$ we get the equation $\Pop \Aop u=\Pop f$,
and with $\Gop_l = \Pop \Aop -\Iop$, we finally get
\[
 u=\Pop f -\Gop_l u .
\]  
Roughly speaking, if $f$ has a well-defined asymptotic behavior, the parametrix $\Pop$ maps $f$ to another function whose asymptotic behavior is also well-defined. For the remainder $\Gop_l u$, we do not know the solution $u$, but regardless of its particular asymptotic behavior, the Green operator $\Gop_l$ maps it into a space with well-defined asymptotic behavior depending only on $\Aop$, we refer to the monographs \cite{ES97,HS08,Schulze98} for a detailed exposition. What remains to be done is an explicit construction of the parametrix and the corresponding Green operator, which can be achieved order by order with a recursive procedure for the asymptotic expansion of the parametrix, cf.~\cite{FHS16} for further details. 
 
\subsection{Weighted cone and edge Sobolev spaces with asymptotics} 
\label{weightedSobolev}
To extract the desired asymptotic information, it is necessary to introduce appropriate function spaces that take care of the asymptotic behavior near a singularity, and corresponding operator algebras that allow us to keep track of it. In the field of singular analysis one considers so-called {\em Sobolev spaces with asymptotics}, which are constructed in a recursive manner. Starting with point-like conic singularities, one  proceeds successively to higher order edge and corner singularities. In the present work, we are primarily concerned with edge-type singularities along the diagonals of the Green's functions under consideration.

Let us first consider conical singularities,  locally modeled by an open stretched cone $$X^\wedge :=\mathbb{R}_+ \times X$$ with base $X$.
To avoid unnecessary generality and complicated notation, we restrict ourselves to cones with base $X$ that are diffeomorphic to $S^2$. The 
weighted Sobolev spaces ${\cal K}^{s,\gamma}(X^\wedge)$ for $s \in \mathbb{N}_0$ are defined with respect to the corresponding spherical polar coordinates $\tilde{x} \rightarrow (r,x)$ as 
\[
 {\cal K}^{s,\gamma}(X^\wedge) := \omega {\cal H}^{s,\gamma}(X^\wedge) +(1-\omega) H^s(\mathbb{R}^{3}) ,
\]
for a cutoff function $\omega$, i.e.~$\omega\in C_0^\infty(\overline{\mathbb{R}}_+)$ such that $\omega(r)=1$ near $r=0$.
Here ${\cal H}^{s,\gamma}(X^\wedge) = r^\gamma {\cal H}^{s,0}(X^\wedge)$ and ${\cal H}^{s,0}(X^\wedge)$
is defined as the set of all $u(r,x) \in r^{-1} L^2(\mathbb{R}_+ \times X)$
such that $(r \partial_r)^jDu \in r^{-1} L^2(\mathbb{R}_+ \times X)$ for all $D \in \mbox{Diff}^{s-j}(X)$,
$0 \leq j \leq s$. The definition for $s \in \mathbb{R}$ follows by duality and complex interpolation.
Weighted Sobolev spaces with asymptotics are then subspaces of ${\cal K}^{s,\gamma}$ spaces defined as direct sums
\begin{equation}
 {\cal K}^{s,\gamma}_Q (X^\wedge) := {\cal E}^\gamma_Q (X^\wedge) + {\cal K}^{s,\gamma}_\Theta (X^\wedge) 
\label{E+K}
\end{equation}
of flattened weighted cone Sobolev spaces
\[
 {\cal K}^{s,\gamma}_\Theta (X^\wedge) := \bigcap_{\epsilon > 0} {\cal K}^{s,\gamma - \vartheta - \epsilon}
 (X^\wedge) 
\]
with $\Theta =(\vartheta,0]$, $-\infty \leq \vartheta < 0$, and
asymptotic spaces
\begin{equation}
 {\cal E}^\gamma_Q (X^\wedge) := \biggl\{ \omega(r) \sum_j \sum_{k=0}^{m_j} c_{jk}(x) r^{-q_j} \ln^k r \biggr\} .
\label{EQgamma}
\end{equation}
The asymptotic space ${\cal E}^\gamma_Q (X^\wedge)$ is characterized by a sequence $q_j \in \mathbb{C}$
which stems from a strip of the complex plane, i.e.
\[
 q_j \in \left\{ z: \frac{3}{2}-\gamma + \vartheta < \Re z < \frac{3}{2}-\gamma \right\} ,
\]
where the width and position of this strip are determined by its {\em weight data} $(\gamma,\Theta)$
with $\Theta =(\vartheta,0]$ and $-\infty \leq \vartheta < 0$. Each substrip of finite width
contains only a finite number of $q_j$. Furthermore, the coefficients
$c_{jk}$ belong to finite dimensional subspaces $L_j \subset C^\infty(X)$.
The asymptotics of ${\cal E}^\gamma_Q(X^\wedge)$ is thus completely
characterized by the {\em asymptotic type} $Q := \{(q_j,m_j,L_j)\}_{j \in \mathbb{Z}_+}$.
In the following we use the asymptotic subspaces
\[
 {\cal S}^\gamma_Q (X^\wedge) := \left\{ u \in {\cal K}^{\infty,\gamma}_Q (X^\wedge) :
 (1- \omega) u \in {\cal S}(\mathbb{R},C^\infty(X))|_{\mathbb{R}_+} \right\} 
\]
with Schwartz-type behavior for exit $r \rightarrow
\infty$. The spaces ${\cal K}^{s,\gamma}_Q(X^\wedge)$ and ${\cal
S}^\gamma_Q (X^\wedge)$ are Fr\'echet spaces equipped with
natural semi-norms according to the decomposition (\ref{E+K}); we
refer to \cite{ES97,HS08,Schulze98} for further details.

Weighted wedge Sobolev spaces on $\mathbb{W} := X^{\wedge} \times Y$ can then be defined as
functions $Y \rightarrow {\cal K}_{(Q)}^{s, \gamma}(X^\wedge)$, where the subscript $Q$ in parentheses means that the correpsonding expression refers to weighted Sobolev spaces with and without asymptotics as well. Here and in the following, upper-case subscripts like $Q$ denote the asymptotic type of cone spaces.
Consider the case $Y= \mathbb{R}^3$ and the corresponding wedge Sobolev spaces
\[
 {\cal W}^s(\mathbb{R}^3, {\cal K}_{(Q)}^{s, \gamma}(X^\wedge)) :=
 \{ u : \mathbb{R}^3 \rightarrow {\cal K}_{(Q)}^{s, \gamma}(X^\wedge)
 \, | \, u \in  \overline{{\cal S}(\mathbb{R}^3, {\cal K}_{(Q)}^{s, \gamma}(X^\wedge)} \} 
\]
with $s, \gamma \in \mathbb{R}$ and norm closure with respect to the norm
\[
 \| u \|_{{\cal W}^{s}(\mathbb{R}^3, {\cal K}_{(Q)}^{s, \gamma}(X^\wedge))}^2 := \int [\eta]^{2s} \| \kappa^{-1}_{[\eta]} (F_{y\rightarrow \eta} u)(\eta)
 \|_{{\cal K}^{s,\gamma}_{(Q)}(X^\wedge)}^2 d\eta .
\]
Here $F_{y\rightarrow \eta}$ denotes the Fourier transform in $\mathbb{R}^3$ and $\{ \kappa_\lambda \}_{\lambda \in \mathbb{R}_+}$
is a strongly continuous group of isomorphisms $\kappa_\lambda : {\cal K}_{(Q)}^{s, \gamma}(X^\wedge) \rightarrow 
{\cal K}_{(Q)}^{s, \gamma}(X^\wedge)$ defined for the three-diemnsional particle case by
\[
 \kappa_\lambda u(r,x,y) := \lambda^{\frac{3}{2}} u(\lambda r,x,y) .
\]
The function $[\eta]$ involved in the norm is given by a strictly positive $C^{\infty}(\mathbb{R}^{3})$ function of the covariables $\eta$ such that
$[\eta] = |\eta|$ for $|\eta| \geq \epsilon >0$. The motivation behind this group action is the twisted homogeneity
of the principal edge symbols, see~\cite{Schulze98} for more details. For  an open subset $Y \subset \mathbb{R}^3$ we define
\[
 {\cal W}^s_{\mbox{\footnotesize comp}}(Y, {\cal K}_{(Q)}^{s, \gamma}(X^\wedge)) :=
 \{ u \in {\cal W}^s(\mathbb{R}^3, {\cal K}_{(Q)}^{s, \gamma}(X^\wedge)): \supp u \subset Y \ \mbox{compact} \} ,
\]
and
\[
 {\cal W}^s_{\mbox{\footnotesize loc}}(Y, {\cal K}_{(Q)}^{s, \gamma}(X^\wedge)) :=
 \{ u \in {\cal D}'(Y, {\cal K}_{(Q)}^{s, \gamma}(X^\wedge)): \varphi u \in
 {\cal W}^s_{\mbox{\footnotesize comp}}(\mathbb{R}^3, {\cal K}_{(Q)}^{s, \gamma}(X^\wedge)) \ \mbox{for each} \ 
 \varphi \in C^\infty_0(Y) \} .
\]
The weighted Sobolev spaces ${\cal W}^{\infty}_{\mbox{\footnotesize comp}}(Y, {\cal K}_{Q}^{\infty, \gamma}(X^\wedge))$, which are of particular interest 
in our applications, have a nice tensor product representation for their asymptotic expansion, cf.~\cite{Schulze98} [Prop.~3.1.33], given by 
\begin{equation}
 \omega(r) \sum_{j} \sum_{k=0}^{m_j} r^{-p_j} \log^k r \, c_{jk}(x) v_{jk}(y) + h_{\Theta}(r,x,y)
\label{appasymp}
\end{equation}
where $(r,x,y)$ denotes the corresponding coordinates on the wedge $X^\wedge \times Y$. Tensor components $c_{jk} \in C^\infty(X)$, $v_{jk} \in H^\infty_{\mbox{\footnotesize comp}}(Y)$
correspond to functions on the base of the cone $X$ and the edge $Y$, respectively.

\subsection{The case of conical singularities}
\label{conicalsingularities}
To give a rough outline of our approach, for simplicity we first consider the free single-particle Hamiltonian $\mathfrak{h}_0 := -\frac{1}{2}\Delta$ and the corresponding resolvent
\[
 R_0(z) := \bigl( \mathfrak{h}_0 -z \bigr)^{-1}, \quad z \in \mathbb{C} \setminus \overline{\mathbb{R}}_+
\]
which is a bounded operator in $B(L^2(\mathbb{R}^3),L^2(\mathbb{R}^3))$ and $B(L^2(\mathbb{R}^3),H^2(\mathbb{R}^3))$, respectively. It has been shown in \cite{Agmon75} that even in the limit $z \rightarrow \lambda$, with $\lambda \in \mathbb{R}_+$, i.e.~in the continuous spectrum of $\mathfrak{h}_0$, the resolvent in uniform operator topology converges to a bounded operator between appropriately weighted function spaces. These functions spaces are defined for $s \in \mathbb{R}$ by
\begin{equation}
 L^{2,s}(\mathbb{R}^3) := \left\{ u \ | \ \bigl( 1+|x|^2 \bigr)^{\frac{s}{2}} u \in L^{2}(\mathbb{R}^3) \right\} ,
\label{L2s}
\end{equation}
\begin{equation}
 H^{2,s}(\mathbb{R}^3) := \left\{ u \ | \ D^{\alpha} u \in L^{2,s}(\mathbb{R}^3) \ \mbox{for} \ |\alpha| \leq 2 \right\} .
\label{H2s}
\end{equation}
According to \cite{Agmon75} [Theorem 4.1], the resolvent $R_0(z)$ can be considered as an analytic operator-valued function on $\mathbb{C} \setminus \overline{\mathbb{R}}_+$, with values in $B(L^{2,s},H^{2,-s})$ for $s>\frac{1}{2}$. For any $\lambda \in \mathbb{R}_+$, in the uniform operator topology of $B(L^{2,s},H^{2,-s})$,
the limits  \[
 R^{+}_0(\lambda) := \lim_{\substack{z \rightarrow \lambda \\ \Im z >0}} R_0(z), \quad
 R^{-}_0(\lambda) := \lim_{\substack{z \rightarrow \lambda \\ \Im z <0}} R_0(z) 
\]
exist.
Furthermore, for any $u \in L^{2,s}$, $s>\frac{1}{2}$, they satisfy the partial differential equation
\begin{equation}
 \bigl( \mathfrak{h}_0 -\lambda \bigr) R^{\pm}_0(\lambda) u =u .
\label{hRpm}
\end{equation}

Given $u \in S_P^{\gamma}(({\cal S}^2)^{\wedge})$ for some asymptotics $P$ with $\frac{1}{2} < \gamma < \frac{3}{2}$ and the two-sphere $S^2$ now being a concrete basis of the cone $X^{\wedge}$, we have $u \in L^{2,s}(\mathbb{R}^3)$ for $s \in \mathbb{R}$ and (\ref{hRpm}) satisfied. Now, what can be said about the asymptotic behavior of $R^{\pm}_0(\lambda) u \in H^{2,-s}$, $s>\frac{1}{2}$. Consider the decomposition of $R^{\pm}_0(\lambda)$ into a short-range and a long-range part 
\begin{equation}
 R^{\pm}_0(\lambda) u = \omega R^{\pm}_0(\lambda) u +(1-\omega) R^{\pm}_0(\lambda) u
\label{Rshortlong}
\end{equation}
for an arbitrary $C^{\infty}_0$ cutoff function $\omega$ that is equal to one on a sphere of radius $r_{\omega}$ around the origin. Putting this into (\ref{hRpm}) we get
\begin{equation}
 \tilde{\omega} \bigl( \mathfrak{h}_0 -\lambda \bigr) \omega R^{\pm}_0(\lambda) u =\tilde{\omega}u ,
\label{ohRpm}
\end{equation}
with $\tilde{\omega} \prec \omega$, i.e.~$\tilde{\omega} \omega =\tilde{\omega}$. 
Now consider the auxiliary operator $\Aop_0:={\mathfrak h}_0 +\mu$, with $\mu >0$, which is an elliptic element in the operator class
$C^2((S^2)^\wedge,\boldsymbol{g})$, $\boldsymbol{g} =(\gamma,
\gamma-2, \Theta)$ of the cone algebra for $\gamma\notin \mathbb{Z}+\frac{1}{2}$ and
$\Theta=(-\infty,0]$. Then $\Aop$ has a parametrix $\Pop_{\gamma}$ in the cone algebra\footnote{For notational simplicity, we suppress the $\mu$-dependence of the parametrix and the Green operator in our notation.}
that belongs to $C^{-2}((S^2)^\wedge,\boldsymbol{g})$,
$\boldsymbol{g} =(\gamma-2, \gamma, \Theta)$. It can be written in the general form
\begin{equation}
 \Pop_{\gamma} = \omega' r^2 \opm_M^{\gamma-3} \bigl( h^{(-1)}(r,w) \bigr) \tilde{\omega}' + \bigl( 1- \omega' \bigr)
 \tilde{\Pop} \bigl( 1- \hat{\omega}' \bigr) ,
\label{paraA}
\end{equation}
where $\omega'$, $\tilde{\omega}'$, $\hat{\omega}'$ are cutoff
functions satisfying $\omega' \prec \tilde{\omega}'$, $\hat{\omega}' \prec \omega'$, and $\tilde{\Pop}$ is a standard pseudo-differential operator of order $-2$ on $\mathbb{R}^3$. By
definition, the parametrix satisfies the equation
\[
 \Pop_{\gamma} \Aop_0 = 1 +\tilde{\Gop}_{\gamma} \ \ \mbox{with} \ \tilde{\Gop}_{\gamma} \in C_G((S^2)^\wedge, \boldsymbol{g}_l), \ \boldsymbol{g}_l=(\gamma,\gamma,\Theta) ,
\]
where the Green operator $\Gop_{\gamma}$ maps ${\cal
K}^{s,\gamma}((S^2)^\wedge)$ to ${\cal S}^\gamma_Q$ for some
discrete asymptotics $Q$. Applying (\ref{paraA}) to (\ref{ohRpm}) yields
\begin{eqnarray*}
 \Pop_{\gamma} \tilde{\omega}u & = &  \Pop_{\gamma} \tilde{\omega} \bigl( \mathfrak{h}_0 -\lambda \bigr) \omega R^{\pm}_0(\lambda) u \\
 & = & \Pop_{\gamma} \tilde{\omega} \bigl( \mathfrak{h}_0 +\mu \bigr) \omega R^{\pm}_0(\lambda) u -(\lambda+\mu) \Pop_{\gamma} \tilde{\omega} R^{\pm}_0(\lambda) u\\
 & = & \omega R^{\pm}_0(\lambda) u +\tilde{\Gop}_{\gamma} \omega R^{\pm}_0(\lambda) u -(\lambda+\mu) \Pop_{\gamma} \tilde{\omega} R^{\pm}_0(\lambda) u 
\end{eqnarray*}
and
\begin{equation}
 \omega R^{\pm}_0(\lambda) u = \Pop_{\gamma} \tilde{\omega}u -\tilde{\Gop}_{\gamma} \omega R^{\pm}_0(\lambda) u +(\lambda+\mu) \Pop_{\gamma} \tilde{\omega} R^{\pm}_0(\lambda) u ,
\label{Rpmu}
\end{equation}
where the first two terms on the right hand side have well-defined asymptotic behavior. Furthermore, we can conclude
\[
 R^{\pm}_0(\lambda) u \in H^{2,-s}(\mathbb{R}^3) \ \rightarrow \ \omega R^{\pm}_0(\lambda) u \in H^{2}(\mathbb{R}^3) \ \rightarrow \ \omega R^{\pm}_0(\lambda) u \in {\cal H}^{2,\gamma}\bigr( (S^2)^{\wedge} \bigr) ,
\]
for $s>\frac{1}{2}$ and weight\footnote{It follows from a Sobolev embedding theorem, see~\cite{GT98}, that $\omega R^{\pm}_0(\lambda) u \in L^{\infty}(\mathbb{R}^3)$ and therefore belongs to ${\cal H}^{2,\gamma}\bigl( (S^2)^{\wedge} \bigr)$ for $\gamma < \frac{3}{2}$, see  \cite{Schulze98} [Prop.~2.1.45]. }
$\frac{1}{2} < \gamma < \frac{3}{2}$.
Next, we shift the weight of the parametrix. This results in an additional Green operator, i.e.~we get
\[
 \Pop_{\gamma} = \Pop_{\gamma+2} +\Gop_{\gamma}
\]
with 
\begin{eqnarray*}
 \Gop_{\gamma} & := & \Pop_{\gamma} - \Pop_{\gamma+2} \\
 & = & \omega' \opm_M^{\gamma-1} \bigl( h^{(-1)}(w) \bigr) \tilde{\omega}'
 - \omega' \opm_M^{\gamma+1} \bigl( h^{(-1)}(w) \bigr) \tilde{\omega}' \\
 & = & \left[ \omega' \opm_M^{\gamma-1} \bigl( h^{(-1)}(w) \bigr) r^{2} \tilde{\omega}'
 - \omega' r^{2} \opm_M^{\gamma-1} \bigl( T^{-2} h^{(-1)}(w) \bigr)
 \tilde{\omega}' \right] r^{-2}.
\end{eqnarray*}
According to \cite[Proposition
2.3.69]{Schulze98}, the term in square brackets belongs to $C_G((S^2)^\wedge,\boldsymbol{g})$ with $\boldsymbol{g}=(\gamma,\gamma,(-\infty,0]).$
Applying this to the last term in (\ref{Rpmu}) yields
\begin{equation}
 \Pop_{\gamma} \omega R^{\pm}_0(\lambda) u = \Pop_{\gamma+2} \omega R^{\pm}_0(\lambda) u +\Gop_{\gamma} \omega R^{\pm}_0(\lambda) u ,
\label{gp2g}
\end{equation}
with $\Pop_{\gamma+2} \omega R^{\pm}_0(\lambda) u \in {\cal H}^{2,\gamma+2}\bigr( (S^2)^{\wedge} \bigr)$, which means that we have moved this term with uncontrolled asymptotic behavior from $\gamma$ to $\gamma+2$. Finally, inserting (\ref{gp2g}) into (\ref{Rpmu}) yields
\begin{equation}
 \omega R^{\pm}_0(\lambda) u = \underbrace{\Pop_{\gamma} \tilde{\omega}u -\tilde{\Gop}_{\gamma} \omega R^{\pm}_0(\lambda) u +(\lambda+\mu)\Gop_{\gamma} \omega R^{\pm}_0(\lambda) u}_{:=R_{\gamma}u} +(\lambda+\mu) \Pop_{\gamma+2} \tilde{\omega} R^{\pm}_0(\lambda) u ,
\label{Rpmu2}
\end{equation}
We can now apply a bootstrap argument using (\ref{Rpmu2}) recursively to shift the weight $\gamma$ to even larger values, i.e.
\begin{eqnarray*}
 \omega R^{\pm}_0(\lambda) u & = & R_{\gamma}u +(\lambda+\mu)\Pop_{\gamma+2} \tilde{\omega}\bigl[ R_{\gamma}u +(\lambda+\mu) \Pop_{\gamma+2} \tilde{\omega} R^{\pm}_0(\lambda) u \bigr] \\
 & = & R_{\gamma}u +(\lambda+\mu) \underbrace{\bigl[ \Pop_{\gamma+2} \tilde{\omega} R_{\gamma}u +(\lambda+\mu) \Gop_{\gamma+2} \tilde{\omega}\Pop_{\gamma+2} \tilde{\omega} R^{\pm}_0(\lambda) u \bigr]}_{:=R_{\gamma+2}u}  \\
 & & +(\lambda+\mu)^2 \Pop_{\gamma+4} \tilde{\omega} \Pop_{\gamma+2} \tilde{\omega} R^{\pm}_0(\lambda) u
\end{eqnarray*}
and continue with
\begin{eqnarray*}
 \omega R^{\pm}_0(\lambda) u & = & R_{\gamma}u +(\lambda+\mu)R_{\gamma+2}u +(\lambda+\mu)^2 \Pop_{\gamma+4} \tilde{\omega} \Pop_{\gamma+2} \tilde{\omega} \bigl[ R_{\gamma}u +(\lambda+\mu) \Pop_{\gamma+2} \tilde{\omega} R^{\pm}_0(\lambda) u \bigr] \\
 & = & R_{\gamma}u +(\lambda+\mu)R_{\gamma+2}u +(\lambda+\mu)^2 \bigl[ \Pop_{\gamma+4} \tilde{\omega} \Pop_{\gamma+2} \tilde{\omega} R_{\gamma}u \\
	&&+	(\lambda+\mu) \Pop_{\gamma+4} \tilde{\omega} \Gop_{\gamma+2} \tilde{\omega} \Pop_{\gamma+2} \tilde{\omega} R^{\pm}_0(\lambda) u 
   +(\lambda+\mu) \Gop_{\gamma+4} \tilde{\omega}\Pop_{\gamma+4} \tilde{\omega}\Pop_{\gamma+2} \tilde{\omega} R^{\pm}_0(\lambda) u \bigr]\\
	&&+(\lambda+\mu)^3 \Pop_{\gamma+6} \tilde{\omega} \Pop_{\gamma+4} \tilde{\omega} \Pop_{\gamma+2} \tilde{\omega} R^{\pm}_0(\lambda) u,
\end{eqnarray*}
where we have used
\begin{eqnarray*}
 \Pop_{\gamma+4} \tilde{\omega} \Pop_{\gamma+2} \tilde{\omega} \Pop_{\gamma+2} \tilde{\omega} R^{\pm}_0(\lambda) u & = & \Pop_{\gamma+4} \tilde{\omega} \bigl[ \Pop_{\gamma+4} +\Gop_{\gamma+2} \bigr]\tilde{\omega} \Pop_{\gamma+2} \tilde{\omega} R^{\pm}_0(\lambda) u \\
 & = & \Pop_{\gamma+4} \tilde{\omega} \Gop_{\gamma+2} \tilde{\omega} \Pop_{\gamma+2} \tilde{\omega} R^{\pm}_0(\lambda) u +\Gop_{\gamma+4} \tilde{\omega} \Pop_{\gamma+4} \tilde{\omega} \Pop_{\gamma+2} \tilde{\omega} R^{\pm}_0(\lambda) u \\
 & & +\Pop_{\gamma+6} \tilde{\omega} \Pop_{\gamma+4} \tilde{\omega} \Pop_{\gamma+2} \tilde{\omega} R^{\pm}_0(\lambda) u .
\end{eqnarray*}
For our applications, the previous considerations must be generalized by including a local potential, i.e.
\begin{equation}
 \mathfrak{h} := -\tfrac{1}{2}\Delta +v
\label{hv1}
\end{equation}
so that the spectrum of $\mathfrak{h}$ consists of a discrete part $\sigma_{disc} \subset \mathbb{R}_-$ with the lowest eigenvalue $\epsilon_1$ and a continuous part $\sigma_{cont} = \overline{\mathbb{R}}_+$. Moreover, the absence of a singular part will be assumed here, and we assume that the potential $v$ is smooth\footnote{For the reduced quantities considered below, we want to avoid introducing additional singularities beyond the leading singularity along their diagonals $x=\tilde{x}$.
A smooth pseudopotential or finite nucleus model would do the job.} 
and satisfies the requirements of \cite{Agmon75}, in particular $|v(x)| \lesssim \bigl| 1+|x| \bigr|^{-1-\epsilon}$, for $\epsilon >0$, but see \cite{AK92}. According to \cite{Agmon75} [Theorem 4.2], our previous discussion can be applied literally to $\mathfrak{h}$ and the corresponding analytic resolvent 
\[
 R(z) := \bigl( \mathfrak{h} -z \bigr)^{-1}, \quad z \in \mathbb{C} \setminus \overline{\mathbb{R}}_+ \cup \sigma_{disc} ,
\]
and limits $\lambda \in \mathbb{R}_+$
\[
 R^{+}(\lambda) := \lim_{\substack{z \rightarrow \lambda \\ \Im z >0}} R(z), \quad
 R^{-}(\lambda) := \lim_{\substack{z \rightarrow \lambda \\ \Im z <0}} R(z) .
\]
For the actual calculation of a parametrix, we refer to \cite{FHSS11,FFH23}, where explicit asymptotic parametrix constructions for $\mathfrak{h}$ and $\mathfrak{h}_0$  with Coulomb potential have been discussed. The smooth potentials considered in the present work can be treated analogously. Alternatively, it maybe preferable to use the formal recursion relation 
\[
 (\mathfrak{h} -z)^{-1} = (\mathfrak{h}_0 -z)^{-1} -(\mathfrak{h}_0 -z)^{-1} v(x) (\mathfrak{h} -z)^{-1}, \quad  \quad z \in \mathbb{C} \setminus \overline{\mathbb{R}}_+ \cap \sigma_{disc} ,  
\]
in a recursive way to derive the asymptotic behavior and reduce everything to the resolvents $R^{\pm}_0$.
Actually, this recursion relation can be applied to the resolvents $R^{\pm}_0$, $R^{\pm}$ as shown in \cite{Agmon75}, where the recursion relation 
\[
 R^{\pm}(\lambda) = R^{\pm}_0(\lambda) +R^{\pm}_0(\lambda) v R^{\pm}(\lambda) 
\] 
was proved to hold
for a sufficiently fast decaying potential $v$.

We can summarize the content of this section in the following lemma.

\begin{lemma}
\label{lemmaRcone}
Let $u \in L^{2,s}(\mathbb{R}^3)$ be a function with specific asymptotic behavior, i.e.~$u$ belongs to a space  ${\cal K}^{s,\gamma}_Q (X^\wedge)$ of asymptotic type $Q$, see~Section \ref{weightedSobolev}. The short-range part of the resolvents $\omega R^{\pm}_0(\lambda)$, $\omega R^{\pm}(\lambda)$, cf.~(\ref{Rshortlong}), map $u$ into
a ${\cal K}^{s,\gamma}_{\tilde{Q}} (X^\wedge)$ of asymptotic type $\tilde{Q}$ via\footnote{The subscrift $(0)$ indicates that the asymptotic expansion can be applied to $R^{\pm}_{0}(\lambda)$ and $R^{\pm}(\lambda)$ as well.} 
\begin{equation}
 \omega R^{\pm}_{(0)}(\lambda) u \sim \omega R_{\gamma}(\lambda,\mu)u +(\lambda+\mu) \omega R_{\gamma+2}(\lambda,\mu)u +(\lambda+\mu)^2 \omega R_{\gamma+4}(\lambda,\mu)u + \cdots ,
\label{Rgamma}
\end{equation}
with
\begin{eqnarray*}
 R_{\gamma}(\lambda,\mu) & := & \Pop_{\gamma} \tilde{\omega} -\tilde{\Gop}_{\gamma} \omega R^{\pm}_{(0)}(\lambda) +(\lambda+\mu)\Gop_{\gamma} \omega R^{\pm}_{(0)}(\lambda) , \\
 R_{\gamma+2}(\lambda,\mu) & := & \Pop_{\gamma+2} \tilde{\omega} R_{\gamma}(\lambda,\mu) +(\lambda+\mu) \Gop_{\gamma+2} \tilde{\omega}\Pop_{\gamma+2} \tilde{\omega} R^{\pm}_{(0)}(\lambda) , \\
 R_{\gamma+4}(\lambda,\mu) & := &  \Pop_{\gamma+4} \tilde{\omega} \Pop_{\gamma+2} \tilde{\omega} R_{\gamma}(\lambda,\mu) +(\lambda+\mu) \Pop_{\gamma+4} \tilde{\omega} \Gop_{\gamma+2} \tilde{\omega} \Pop_{\gamma+2} \tilde{\omega} R^{\pm}_{(0)}(\lambda) \\
 & & +(\lambda+\mu) \Gop_{\gamma+4} \tilde{\omega}\Pop_{\gamma+4} \tilde{\omega}\Pop_{\gamma+2} \tilde{\omega} R^{\pm}_{(0)}(\lambda) , \\
 & \vdots & 
\end{eqnarray*}
where $\Pop_{\gamma}, \Pop_{\gamma+2}, \Pop_{\gamma+4}, \ldots$ are parametrices and $\tilde{\Gop}_{\gamma}, \Gop_{\gamma}, \Gop_{\gamma+2}, \Gop_{\gamma+4}, \ldots$ corresponding Green operators of the partial differential operator ${\mathfrak h}_{0} +\mu$ or ${\mathfrak h} +\mu$, $\mu >0$, depending on which variant, i.e.~$R^{\pm}_0(\lambda)$ or $R^{\pm}(\lambda)$, of (\ref{Rgamma}) has been considered.
\end{lemma}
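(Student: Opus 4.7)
The statement is essentially a compact repackaging of the bootstrap derivation carried out in the paragraphs immediately preceding it, so my plan is to organize that derivation into a clean inductive argument rather than to introduce new machinery.

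The starting point is Agmon's limiting absorption principle, which (as already recalled) guarantees that $R^\pm_{(0)}(\lambda)\colon L^{2,s}\to H^{2,-s}$ is bounded for $s>\tfrac12$ and satisfies $(\mathfrak{h}_{(0)}-\lambda)R^\pm_{(0)}(\lambda)u = u$. First I would fix the localization by writing $R^\pm_{(0)}(\lambda)u = \omega R^\pm_{(0)}(\lambda)u + (1-\omega)R^\pm_{(0)}(\lambda)u$ and multiplying the defining equation by a cutoff $\tilde\omega\prec\omega$, which yields (\ref{ohRpm}) and reduces the problem to a PDE on the stretched cone $(S^2)^\wedge$. The Sobolev embedding already noted in the paper places $\omega R^\pm_{(0)}(\lambda)u$ in $\mathcal{H}^{2,\gamma}((S^2)^\wedge)$ for $\tfrac12<\gamma<\tfrac32$, giving us the initial weight level for the bootstrap.

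Next I would invoke the cone-algebra machinery: the operator $\mathcal{A}_0 = \mathfrak{h}_{(0)}+\mu$ (with $\mu>0$, well outside the spectrum so everything is invertible on the relevant Sobolev scales) is elliptic in $C^2((S^2)^\wedge,\boldsymbol{g})$ and admits the parametrix $\mathcal{P}_\gamma$ of the form (\ref{paraA}) with Green remainder $\tilde{\mathcal{G}}_\gamma$. Applying $\mathcal{P}_\gamma\tilde\omega$ to (\ref{ohRpm}) and rearranging gives the identity (\ref{Rpmu}). Since $\mathcal{P}_\gamma$ maps $\mathcal{K}^{s,\gamma-2}_Q$ into $\mathcal{K}^{s,\gamma}_{Q'}$ with an updated asymptotic type $Q'$ determined by the Mellin symbol $h^{(-1)}$, and since $\tilde{\mathcal{G}}_\gamma$ maps into $\mathcal{S}^\gamma_{Q''}$ with a discrete asymptotics $Q''$ depending only on $\mathcal{A}_0$, both of the first two terms on the right-hand side of (\ref{Rpmu}) already have controlled asymptotics; only $(\lambda+\mu)\mathcal{P}_\gamma\tilde\omega R^\pm_{(0)}(\lambda)u$ remains "uncontrolled."

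The bootstrap step now consists of the weight shift $\mathcal{P}_\gamma = \mathcal{P}_{\gamma+2} + \mathcal{G}_\gamma$, verified in the excerpt via \cite[Prop.~2.3.69]{Schulze98}, which converts the uncontrolled term into a new $R_{\gamma+2}$-term (with asymptotics determined by further action of $\mathcal{P}_{\gamma+2}$ and the Green operators) plus an uncontrolled remainder now living in the higher weight space $\mathcal{H}^{2,\gamma+2}$. Iterating this identity $n$ times, and at each iteration using the product identity for $\mathcal{P}_{\gamma+2k}\tilde\omega\mathcal{P}_{\gamma+2(k-1)}$ displayed in the text, produces exactly the terms $R_\gamma,R_{\gamma+2},R_{\gamma+4},\ldots$ listed in the lemma, together with a remainder in $\mathcal{H}^{2,\gamma+2n}$. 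Taking $n\to\infty$ in the sense of an asymptotic expansion (the $\sim$ in (\ref{Rgamma})) yields the stated formula, with $\tilde Q$ the union of the asymptotic types produced by $\mathcal{P}_\gamma Q$, by the Green operators $\tilde{\mathcal{G}}_\gamma,\mathcal{G}_{\gamma+2k}$, and by the shift operators. For the potential case $\mathfrak{h} = -\tfrac12\Delta + v$ I would either repeat the construction using a parametrix of $\mathfrak{h}+\mu$ as in \cite{FHSS11,FFH23}, or reduce to $\mathfrak{h}_0$ via the resolvent identity $R^\pm(\lambda)=R^\pm_0(\lambda)+R^\pm_0(\lambda)vR^\pm(\lambda)$ from \cite{Agmon75}, using smoothness of $v$ to keep the asymptotic type unchanged.

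The main obstacle I anticipate is bookkeeping rather than analysis: one must check that at every iteration the Green operators produced by the weight shift and by the product identity map the relevant cone spaces into Schwartz-type asymptotic spaces $\mathcal{S}^{\gamma+2k}_{Q_k}$ with asymptotic types $Q_k$ that can be accumulated consistently into a single $\tilde Q$. This is a standard but somewhat delicate verification in the cone calculus, and it is exactly where the hypothesis $\gamma\notin \mathbb{Z}+\tfrac12$ (needed for ellipticity of $\mathcal{A}_0$ in $C^2((S^2)^\wedge,\boldsymbol{g})$) enters at every weight level $\gamma+2k$.
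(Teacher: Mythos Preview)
Your proposal is correct and follows essentially the same approach as the paper: the lemma is explicitly presented there as a summary of the preceding bootstrap derivation (localization via cutoffs, application of the cone-algebra parametrix $\Pop_{\gamma}$ to (\ref{ohRpm}), the weight shift $\Pop_{\gamma}=\Pop_{\gamma+2}+\Gop_{\gamma}$, and iteration), and your write-up reproduces exactly these steps in the same order. The only minor point is that the paper does not give a separate proof at all---it simply states the lemma as a repackaging of the computations already displayed---so your organized inductive presentation is, if anything, slightly more explicit than what the paper provides.
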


\begin{remark}
\label{remarkgreen}
For smooth potentials $v$ in (\ref{hv1}), the Green operators in the asymptotic expansion (\ref{Rgamma}) map to spaces ${\cal K}^{s,\gamma}_{\tilde{Q}} (X^\wedge)$ with smooth asymptotic type $\tilde{Q}$, which means that a function $f \in {\cal E}^{s,\gamma}_{\tilde{Q}} (X^\wedge)$, cf.~(\ref{EQgamma}), can be extended to a function in $C^{\infty}(\mathbb{R}^3)$ after changing to Cartesian coordinates. It should be noted, however, that the pseudo-differential calculus also allows for singular potentials, e.g.~of the  Coulomb-type, see~\cite{FHSS11,FSS08} for specific applications.
\end{remark}

\subsection{The case of edge-type singularities}
\label{edgesingularities}
A situation analogous to the conical case mentioned above arises for edge-type singularities, where one considers two-particle systems in $\mathbb{R}^3 \times \mathbb{R}^3$ with an edge along the diagonal. Perturbation theory needs analytic resolvents
\[
 R^{(2)}_{(0)}(z) := \bigl( \mathfrak{h}_{(0),1} +\mathfrak{h}_{(0),2} -z \bigr)^{-1}, \quad z \in \mathbb{C} \setminus \bigl( \overline{\mathbb{R}}_+ \cup \sigma_{disc} \bigr)
\]
of the (free) two-particle Hamiltonian $\mathfrak{h}_{(0),1} +\mathfrak{h}_{(0),2}$ with $${\mathfrak h}_{0,i} :=-\tfrac{1}{2} \Delta_i \quad \mbox{and} \quad  {\mathfrak h}_i := {\mathfrak h}_{0,i} +v_i,\ i=1,2,$$
which are bounded operators in $B\bigl( L^2(\mathbb{R}^3 \times \mathbb{R}^3),L^2(\mathbb{R}^3 \times \mathbb{R}^3) \bigr)$ and $B\bigl( L^2(\mathbb{R}^3 \times \mathbb{R}^3),H^2(\mathbb{R}^3 \times \mathbb{R}^3) \bigr)$.
According to the limiting absorption principle, for $R^{(2)}_{0}(z)$ and any $\lambda \in \mathbb{R}_+$,  the  limits
\begin{equation}
 R^{(2),+}_0(\lambda) := \lim_{\substack{z \rightarrow \lambda \\ \Im z >0}} R^{(2)}_0(z), \quad
 R^{(2),-}_0(\lambda) := \lim_{\substack{z \rightarrow \lambda \\ \Im z <0}} R^{(2)}_0(z) .
\label{R20limits}
\end{equation}
exist in the uniform operator topology of $B(L^{2,s},H^{2,-s})$.
For these limits, the discussion in Section \ref{conicalsingularities} can be adopted almost verbatim, replacing the conical Sobolev spaces by the wedge Sobolev spaces discussed in Section \ref{weightedSobolev}, and the conical pseudo-differential calculus by a generalized edge pseudo-differential calculus. Since we do not make explicit use of the edge calculus in the rest of this paper, we only refer to the monographs \cite{HS08,Schulze98} for a general exposition and \cite{FH10,FHS15,FHS16,FFH17,FFHS20} for applications in electronic structure theory.

In the case of $R^{(2)}(z)$, however, the limiting absorption principle is not directly applicable. This is due to the fact that the potential part $v(x_1)+v(x_2)$, $(x_1,x_2) \in \mathbb{R}^3 \times \mathbb{R}^3$ does not satisfy $|v(x_1)+v(x_2)| \lesssim \bigl| 1+|(x_1,x_2)| \bigr|^{-1-\epsilon}$, for some $\epsilon >0$, even if $|v(x)| \lesssim \bigl| 1+|x| \bigr|^{-1-\epsilon}$ is satisfied by the potential $v$ for some $\epsilon >0$. 
Since we are only interested in the asymptotic behavior along the diagonal, we use the formal expansion
\begin{equation}
 \bigl( {\mathfrak h}_1 +{\mathfrak h}_2 -z \bigr)^{-1} =\bigl( {\mathfrak h}_{0,1} +{\mathfrak h}_{0,2} -z \bigr)^{-1} -\bigl( {\mathfrak h}_{0,1} +{\mathfrak h}_{0,2} -z \bigr)^{-1} \bigl( v_1 +v_2 \bigr) \bigl( {\mathfrak h}_1 +{\mathfrak h}_2 -z \bigr)^{-1} ,
\label{hv2hexpansion}
\end{equation}
and introduce appropriate cutoff functions $\omega_{12} \in C^{\infty}_0(\mathbb{R}^3 \times \mathbb{R}^3)$, with $\omega_{12}(x_1,x_2) =1$ for $|x_1-x_2| <c_1$ and $\omega_{12}(x_1,x_2) =0$ for $|x_1-x_2| > c_2$, with $0 <c_1 <c_2$.
Now we consider the first order term in the expansion (\ref{hv2hexpansion}), i.e.
\begin{eqnarray}
\nonumber
 \lefteqn{\bigl( {\mathfrak h}_{0,1} +{\mathfrak h}_{0,2} -z \bigr)^{-1} \bigl( v_1 +v_2 \bigr) \bigl( {\mathfrak h}_{0,1} +{\mathfrak h}_{0,2} -z \bigr)^{-1} \omega_{12}} \hspace{3cm} \\ \label{tildeom}
 & = &   \bigl( {\mathfrak h}_{0,1} +{\mathfrak h}_{0,2} -z \bigr)^{-1} \bigl( v_1 +v_2 \bigr) \tilde{\omega}_{12} \bigl( {\mathfrak h}_{0,1} +{\mathfrak h}_{0,2} -z \bigr)^{-1} \omega_{12} \\ \nonumber
 & & +\bigl( {\mathfrak h}_{0,1} +{\mathfrak h}_{0,2} -z \bigr)^{-1} \bigl( v_1 +v_2 \bigr) \bigl( 1 -\tilde{\omega}_{12} \bigr) \bigl( {\mathfrak h}_{0,1} +{\mathfrak h}_{0,2} -z \bigr)^{-1} \omega_{12} ,
\end{eqnarray}
with $\omega_{12} \prec \tilde{\omega}_{12}$. The multiplicative operator $\bigl( v_1 +v_2 \bigr) \tilde{\omega}_{12}$ in the first term on the right hand side of (\ref{tildeom}) satisfies an estimate of the form 
\begin{equation}
 |\bigl( v(x_1) +v(x_2) \bigr) \tilde{\omega}_{12}(x_1,x_2)| \lesssim \bigl| 1+|(x_1,x_2)| \bigr|^{-1-\epsilon} ,
\label{vx12est}
\end{equation} 
for some $\epsilon >0$, if $|v(x)| \lesssim \bigl| 1+|x| \bigr|^{-1-\epsilon}$ with $x \in {\mathbb R}^3$ is satisfied for some $\epsilon >0$.
From (\ref{vx12est}) it follows that $\bigl( v_1 +v_2 \bigr) \tilde{\omega}_{12}$ belongs to $B(H^{2,-s}, L^{2,s})$ for some $s>\frac{1}{2}$, and by submultiplicativity of the operator norm we get
\[
 \lim_{\substack{z \rightarrow \lambda \\ \Im z >0}} \bigl( {\mathfrak h}_{0,1} +{\mathfrak h}_{0,2} -z \bigr)^{-1} \bigl( v_1 +v_2 \bigr) \tilde{\omega}_{12} \bigl( {\mathfrak h}_{0,1} +{\mathfrak h}_{0,2} -z \bigr)^{-1} \omega_{12} = R^{(2),+}_0(\lambda) \bigl( v_1 +v_2 \bigr) \tilde{\omega}_{12} R^{(2),+}_0(\lambda) \omega_{12} ,
\]
and the corresponding limit for $\Im z <0$ in the uniform operator topology.
The second term on the right hand side of (\ref{tildeom}) represents a smoothing operator. This can be seen by looking at the kernel function of the operator
\[
 \bigl( 1 -\tilde{\omega}_{12} \bigr) \bigl( {\mathfrak h}_{0,1} +{\mathfrak h}_{0,2} -z \bigr)^{-1} \omega_{12} ,
\]
which is given by 
\begin{equation}
 \bigl( 1 -\tilde{\omega}_{12}(x_1,x_2) \bigr) k(x_1,x_2|\tilde{x}_1,\tilde{x}_2,z) \omega_{12}(\tilde{x}_1,\tilde{x}_2)
\label{kxtx}
\end{equation}
where $k \in C^{\infty}(\mathbb{R}^6 \times \mathbb{R}^6 \setminus D)$ is  singular along the diagonal $D$. Because of $\omega_{12} \prec \tilde{\omega}_{12}$, the multiplicative operators on the left and right side of $k$ in (\ref{kxtx}) cut out a neighborhood of the diagonal $D$.  
Therefore, (\ref{kxtx}) and its limits $z \rightarrow \lambda$ belong to $C^{\infty}(\mathbb{R}^6 \times \mathbb{R}^6)$ and can be ignored when considering the singular asymptotic behavior modulo smooth contributions.

For edge-type singularities, we can summarize the previous considerations in the following proposition.

\begin{proposition}
\label{propositionRedge}
Let $u \in L^{2,s}(\mathbb{R}^3 \times \mathbb{R}^3)$ be a function with specific asymptotic behavior along the diagonal $D$, i.e., $u$ belongs to a wedge Sobolev space ${\cal W}^s(D, {\cal K}_{Q}^{s, \gamma}(X^\wedge))$ of asymptotic type $Q$, see Section~\ref{weightedSobolev}. 
The singular asymptotic behavior of $\tilde{\omega} R^{\pm}_0(\lambda) \omega u$ and $\tilde{\omega} R^{\pm}(\lambda) \omega u$, modulo smooth remainders, can be derived from the pseudo-differential edge calculus by constructions analogous to the case of a conical singularity. In the case of $\tilde{\omega} R^{\pm}(\lambda) \omega u$, we assume for the potential $v$ that the estimate $|v(x)| \lesssim \bigl| 1+|x| \bigr|^{-1-\epsilon}$ is satisfied for some $\epsilon >0$.
\end{proposition}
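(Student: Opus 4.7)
The plan is to reduce the proposition to Lemma~\ref{lemmaRcone} by (i) replacing the cone algebra with the Schulze edge algebra on the wedge $X^\wedge \times D$ that models a tubular neighborhood of the diagonal $D \cong \mathbb{R}^3$, and (ii) using the formal expansion (\ref{hv2hexpansion}) together with the localization argument (\ref{tildeom})--(\ref{kxtx}) to pass from the free case to the interacting one.

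For $\tilde{\omega} R^{(2),\pm}_0(\lambda) \omega u$, I would first introduce the auxiliary operator $\Aop^{(2)}_0 := \mathfrak{h}_{0,1} + \mathfrak{h}_{0,2} + \mu$ with $\mu > 0$. This is an elliptic element of the edge pseudo-differential algebra on $X^\wedge \times D$, and by \cite{Schulze98,HS08} it admits a parametrix $\Pop^{(2)}_\gamma$ of order $-2$ with an edge Green operator remainder $\tilde{\Gop}^{(2)}_\gamma$ mapping ${\cal W}^s(D,{\cal K}^{s,\gamma}(X^\wedge))$ into a wedge Sobolev space with discrete asymptotic type $\tilde{Q}$ of the form (\ref{appasymp}). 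The weight-shift identity $\Pop^{(2)}_\gamma = \Pop^{(2)}_{\gamma+2} + \Gop^{(2)}_\gamma$ is the direct edge analog of the identity used in the cone case. Combined with the existence of the limit (\ref{R20limits}) in $B(L^{2,s}, H^{2,-s})$, the chain of identities (\ref{ohRpm})--(\ref{Rpmu2}) then carries over verbatim; a bootstrap on the weight $\gamma$ produces the edge analog of (\ref{Rgamma}), that is, an asymptotic expansion of $\tilde{\omega} R^{(2),\pm}_0(\lambda) \omega u$ into pieces with controlled wedge-asymptotic type, modulo a smooth remainder.

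For $\tilde{\omega} R^{(2),\pm}(\lambda) \omega u$, I would apply the recursion (\ref{hv2hexpansion}) and decompose its second term as in (\ref{tildeom}). The first summand of (\ref{tildeom}) contains the effective potential $(v_1+v_2)\tilde{\omega}_{12}$, which by the near-diagonal cutoff and the decay assumption on $v$ satisfies the global estimate (\ref{vx12est}); it therefore lies in $B(H^{2,-s}, L^{2,s})$ for $s > 1/2$, and submultiplicativity combined with the limiting absorption principle for $R^{(2),\pm}_0$ yields a norm-limit in $B(L^{2,s}, H^{2,-s})$. The second summand of (\ref{tildeom}) has a kernel of the form (\ref{kxtx}), which is smooth because the flanking cutoffs carve out a neighborhood of the diagonal; it contributes only to the smooth remainder. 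Iterating (\ref{hv2hexpansion}) to the required order expresses $\tilde{\omega} R^{(2),\pm}(\lambda) \omega u$ as a finite sum of terms in which factors $R^{(2),\pm}_0$ are interleaved with $(v_1+v_2)\tilde{\omega}_{12}$, plus smooth contributions; the first part of the proof then extracts the singular edge-type asymptotics of each such term.

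The hard part will be the edge parametrix construction and the bookkeeping of Green operators inside the bootstrap. In contrast to the cone calculus, the edge calculus employs operator-valued symbols with twisted homogeneity $\kappa_\lambda u(r,x,y) = \lambda^{3/2} u(\lambda r, x, y)$, so both $\Pop^{(2)}_\gamma$ and its Green remainders carry transverse asymptotic information along $D$. Verifying that the weight-shift identity produces a remainder inside the edge algebra, and that repeated compositions with cutoffs and parametrices during the iteration stay within the algebra with controlled asymptotic type, requires the composition and mapping theorems of \cite{Schulze98}. A secondary subtlety is that the far-diagonal contribution $(1-\tilde{\omega}_{12}) R^{(2),\pm}_0(\lambda) \omega_{12}$ must be shown to remain smoothing not only as a standalone operator but also after composition with the further $R^{(2),\pm}_0$ and potential factors generated by higher-order terms of (\ref{hv2hexpansion}); this reduces to the kernel analysis already sketched at (\ref{kxtx}).
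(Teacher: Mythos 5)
Your proposal follows essentially the same route as the paper: for the free resolvent you transfer the conical parametrix/Green-operator bootstrap of Section~\ref{conicalsingularities} verbatim to the edge calculus on the wedge over the diagonal, and for the interacting resolvent you use the expansion (\ref{hv2hexpansion}) with near-diagonal cutoffs, the estimate (\ref{vx12est}), submultiplicativity together with the limiting absorption principle, and the smoothness of the off-diagonal kernel (\ref{kxtx}) to dispose of the remainder modulo smooth terms. This matches the paper's own argument, which is exactly the discussion preceding the proposition; your additional remarks on the twisted group action and composition theorems only make explicit the technical points the paper delegates to \cite{Schulze98,HS08}.
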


\section{Asymptotic behavior of one-particle Green's functions}
In this section we turn to the main topic of the present work, which is the small distance behavior of reduced quantities such as Green's functions in electronic structure theory.

The singular behavior of Feynman diagrams in quantum field theory can be classified according to Weinberg's theorem \cite{Weinb60}. It would be desirable to have a similar tool in electronic structure theory that provides a priori information about the regularity properties of these diagrams. A first step in this direction has been made in 
\cite{FFH17,FFH21}, where the asymptotic behavior and the Besov regularity of certain classes of Goldstone diagrams, i.e.~ring and ladder diagrams as well as combinations thereof, were derived. In the present work we extend these results to frequency-dependent Feynman diagrams to study the time-dependent models mentioned above in the long run. Finally, we want to use them to reduce the computational complexity of numerical algorithms for these models in the future.

As a first step in this direction, we apply the techniques from singular analysis, outlined in the previous Sections, to study the asymptotic small distance behavior of frequency-dependent one-particle Green's functions up to 2nd order perturbation theory.

\subsection{One-particle Green's function in 1st order perturbation theory}
Let us consider the one-particle Green's function in 1st order perturbation theory
\begin{equation}
 G_{1p}(x,\tilde{x},\omega) = -\sum_{j,a,b} \phi_a(x) \phi_b(\tilde{x}) \frac{\langle aj|V^{(2)}|bj \rangle -\langle aj|V^{(2)}|jb \rangle}{(\varepsilon_a-\omega -i\eta) (\varepsilon_b-\omega -i\eta)} 
\label{G1p}
\end{equation}
with the two-particle Coulomb potential $V^{(2)}$,
where the limit $\eta \searrow 0$ must be understood implicitly. 
The discrete representation (\ref{G1p}) can be translated back into the continuum by introducing the one-particle Hamiltonian (\ref{hv1}),
a multiplicative operator $\hat{V}_c$ corresponding to the Hartree potential
\[
 V_c(x) := \sum_j \int V^{(2)}(x,\tilde{x}) |\phi_j(\tilde{x})|^2 \, d\tilde{x} ,
\]
and the integral exchange operator $\hat{V}_x$ with the kernel function
\[
 V_x(x,\tilde{x}) := -\sum_j \phi_j(x) V^{(2)}(x,\tilde{x}) \phi_j(\tilde{x}) .
\]
Both operators depend on the two-particle Coulomb potential $V^{(2)}$ and
have already been discussed from a singular analysis point of view in \cite{FFH17,FFH21,FSS08}. For $\phi_j$ in the Schwartz space, the exchange operator $\hat{V}_x$ can be  viewed as a parameter-dependent kernel function of a classical pseudo-differential operator in the H\"ormander class $S^{-2}(\mathbb{R}^3 \times \mathbb{R}^3)$.
Next, we introduce the operator
\begin{equation}
 \hat{A}^{+}(\omega) := \Qvirt \bigl( {\mathfrak h} -\omega-i\eta \bigr)^{-1} \Qvirt ,
\label{Apomega}
\end{equation}
with $\eta >0$ and consider the operator 
\begin{equation}
 \hat{G}^{+}_{1p}(\omega) :=-\hat{A}^{+}(\omega) \bigl( \hat{V}_c +\hat{V}_x \bigr) \hat{A}^{+}(\omega) ,
\label{G+1p}
\end{equation}
where (\ref{G1p}) is  the kernel function in the limit $\eta \searrow 0$.
To apply the limiting absorption principle, i.e.-to take the limit $\eta \searrow 0$ in (\ref{G+1p}), we assume that the potential operator $\hat{V}_c +\hat{V}_x$ belongs to $B(H^{2,-s}, L^{2,s})$ for some $s>\frac{1}{2}$. Due to the exponential decay of the eigenfunctions $\phi_j$ this is obviously true for the exchange part $\hat{V}_x$ but not for the multiplicative part $\hat{V}_c$ because of the slow decay of the Coulomb potential. To satisfy this  requirement, we replace the long-range Coulomb potential in the following discussion by a screened Coulomb potential with sufficiently fast decay at infinity. Since we are only concerned with the effects of the singular short-range part of the Coulomb potential, such an approach seems to be reliable. According to the limiting absorption principle, the limit
\[
 \hat{A}_{0}(\omega) := \lim_{\eta \searrow 0} \hat{A}^{+}(\omega)
\]
exists in the uniform operator topology, with $\hat{A}_{0}(\omega) \in B(L^{2,s},H^{2,-s})$ for $s>\frac{1}{2}$.
The following lemma connects the one-particle Green's function (\ref{G1p}) to an operator.

\begin{lemma}
Let $\hat{V}_c +\hat{V}_x \in B(H^{2,-s}, L^{2,s})$ for some $s>\frac{1}{2}$. Then we get the limit 
	\[
 \hat{G}_{1p}(\omega) = lim_{\eta \searrow 0} \hat{G}^{+}_{1p}(\omega) 
\]
in the uniform operator topology,
with $\hat{G}_{1p}(\omega) \in B(L^{2,s},H^{2,-s})$ given by
\begin{equation}
 \hat{G}_{1p}(\omega) = -\hat{A}_{0}(\omega) \bigl( \hat{V}_c +\hat{V}_x \bigr) \hat{A}_{0}(\omega) .
\label{G1pop}
\end{equation}
The operator (\ref{G1pop}) has the one-particle Green's function (\ref{G1p}) as its kernel function.  
\label{lemmaG1p}
\end{lemma}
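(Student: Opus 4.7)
The plan is to combine the limiting absorption principle of Section~\ref{conicalsingularities} (cf.\ Lemma~\ref{lemmaRcone}) with submultiplicativity of operator norms to push $\eta\searrow 0$ through the triple product defining $\hat{G}^{+}_{1p}(\omega)$, and then to identify the kernel of the limit through the generalized spectral expansion of $\mathfrak{h}$ discussed in Section~\ref{spectralresolution}.

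First, I would check that both $\hat{G}^{+}_{1p}(\omega)$ and the proposed limit $\hat{G}_{1p}(\omega)$ are well defined in $B(L^{2,s},H^{2,-s})$. Since $\Qocc$ is a finite-rank projection onto a subspace of exponentially decaying bound states $\phi_i$, the complementary projection $\Qvirt = 1-\Qocc$ is bounded on both $L^{2,s}$ and $H^{2,-s}$; hence $\hat{A}^{+}(\omega)\in B(L^{2,s},H^{2,-s})$ for each $\eta>0$, and the limiting absorption principle yields $\hat{A}^{+}(\omega)\to\hat{A}_{0}(\omega)$ in the uniform operator topology of this space. Combined with the standing hypothesis $\hat{V}_c+\hat{V}_x\in B(H^{2,-s},L^{2,s})$, the composition chain $L^{2,s}\to H^{2,-s}\to L^{2,s}\to H^{2,-s}$ is bounded both for $\eta>0$ and for the formal limit. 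Norm convergence is then obtained by inserting the intermediate operator $-\hat{A}_0(\omega)(\hat{V}_c+\hat{V}_x)\hat{A}^{+}(\omega)$ and applying the triangle inequality together with submultiplicativity,
\begin{align*}
 \|\hat{G}^{+}_{1p}(\omega)-\hat{G}_{1p}(\omega)\|
 &\leq \|\hat{A}^{+}-\hat{A}_0\|\,\|\hat{V}_c+\hat{V}_x\|\,\|\hat{A}^{+}\|\\
 &\quad+\|\hat{A}_0\|\,\|\hat{V}_c+\hat{V}_x\|\,\|\hat{A}^{+}-\hat{A}_0\|,
\end{align*}
where the norms are understood in the appropriate pairs of weighted spaces and $\|\hat{A}^{+}(\omega)\|$ is uniformly bounded near $\eta=0$ thanks to the uniform-topology convergence.

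For the kernel identification I would, at fixed $\eta>0$, insert the generalized spectral expansion of $\mathfrak{h}$ on ${\cal H}_{\mbox{virt}}$, i.e.\ write the kernel of $\hat{A}^{+}(\omega)$ as the formal sum $\sum_a \phi_a(x)\overline{\phi_a(\tilde{x})}/(\varepsilon_a-\omega-i\eta)$ where $a$ ranges over the virtual indices (discrete bound states above the occupied ones plus the continuous spectrum, in the convention of Section~\ref{spectralresolution}). Sandwiching the multiplicative kernel $V_c(y)\delta(y-z)$ of $\hat{V}_c$ and the integral kernel $V_x(y,z)$ of $\hat{V}_x$ between two such expansions, a short routine computation collapses the sums over occupied indices $j$ hidden inside $V_c$ and $V_x$ into the matrix elements $\langle aj|V^{(2)}|bj\rangle-\langle aj|V^{(2)}|jb\rangle$ of (\ref{G1p}), with the correct overall sign produced by the minus in the definition of $V_x$. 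The main obstacle will be justifying the interchange of the $\eta\searrow 0$ limit with the spectral sum/integral producing the kernel; this is reconciled with the operator-theoretic limit in (\ref{G1pop}) by the norm convergence established above, which forces weak (distributional) convergence of the associated kernel functions and thereby allows (\ref{G1p}), with the implicit prescription $\eta\searrow 0$, to be interpreted as the kernel of the limit $\hat{G}_{1p}(\omega)$.
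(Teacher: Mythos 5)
Your argument is essentially the paper's own proof: the norm convergence is obtained by exactly the same insertion of the intermediate operator $-\hat{A}_{0}(\omega)\bigl(\hat{V}_c+\hat{V}_x\bigr)\hat{A}^{+}(\omega)$, the triangle inequality, submultiplicativity of the $B(L^{2,s},H^{2,-s})$-norms, and the limiting absorption principle giving $\hat{A}^{+}(\omega)\rightarrow\hat{A}_{0}(\omega)$ in the uniform operator topology. Your additional remarks on the boundedness of $\Qvirt$ between the weighted spaces and on identifying the kernel via the generalized spectral expansion go slightly beyond what the paper writes down (it simply asserts the kernel statement), but they do not change the route of the proof.
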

\begin{proof}
The proof is a simple consequence of the submultiplicativity of the corresponding operator norms, i.e.
\begin{eqnarray*}
 \| \hat{G}^{+}_{1p}(\omega) -\hat{G}_{1p}(\omega) \|_{L^{2,s}, H^{2,-s}} & = & \| \hat{A}^{+}(\omega) \bigl( \hat{V}_c +\hat{V}_x \bigr) \hat{A}^{+}(\omega) -\hat{A}_{0}(\omega) \bigl( \hat{V}_c +\hat{V}_x \bigr) \hat{A}_{0}(\omega) \|_ {L^{2,s}, H^{2,-s}} \\
 & \leq & \| \bigl( \hat{A}^{+}(\omega) - \hat{A}_{0}(\omega) \bigr) \bigl( \hat{V}_c +\hat{V}_x \bigr) \hat{A}^{+}(\omega) \|_ {L^{2,s}, H^{2,-s}} \\
  & & +\| \hat{A}_{0}(\omega) \bigl( \hat{V}_c +\hat{V}_x \bigr) \bigl( \hat{A}^{+}(\omega) -\hat{A}_{0}(\omega) \bigr) \|_ {L^{2,s}, H^{2,-s}} \\  
 & \leq & \bigl( \| \hat{A}^{+}(\omega) \|_{L^{2,s}, H^{2,-s}} + \| \hat{A}_{0}(\omega) \|_{L^{2,s}, H^{2,-s}} \bigr) \| \hat{V}_c +\hat{V}_x \|_{H^{2,-s},L^{2,s}} \\
 & & \times \| \hat{A}^{+}(\omega) -\hat{A}_{0}(\omega) \|_{L^{2,s}, H^{2,-s}}
\end{eqnarray*}
\end{proof}

Let us assume that $\hat{V}_c +\hat{V}_x$ is a bounded multiplicative operator between spaces  ${\cal K}^{s,\gamma}$ with equal weights but possibly different asymptotic types $Q$ and $\tilde Q$, respectively, i.e.
\[
 \hat{V}_c +\hat{V}_x: \quad {\cal K}^{s,\gamma}_Q (X^\wedge) \rightarrow {\cal K}^{s,\gamma}_{\tilde{Q}} (X^\wedge) ,
\]
which is the case, e.g. for a Yukawa-like screening of the Coulomb potential
\[
 V^{(2)}(x_1,x_2) = e^{-\alpha |x_1-x_2|}/|x_1-x_2| , \ \ \alpha >0 .
\] 
To apply Lemma \ref{lemmaRcone}, we introduce appropriate cutoff functions $\eta \in C^{\infty}_0(\mathbb{R}^3)$ with $\eta(x) =1$ for $|x| <c_1$ and $\eta(x) =0$ for $|x| > c_2$ with $0 <c_1 <c_2$, and consider in (\ref{G1pop}) the decomposition 
\begin{eqnarray*}
 -\hat{\eta} \hat{G}_{1p}(\omega) & = & \hat{\eta}\hat{A}_{0}(\omega) \tilde{\eta} \bigl( \hat{V}_c +\hat{V}_x \bigr) \eta \hat{A}_{0}(\omega) +\hat{\eta}\hat{A}_{0}(\omega) \underbrace{\tilde{\eta} \bigl( \hat{V}_c +\hat{V}_x \bigr) \bigl( 1-\eta \bigr)}_{\mbox{s.o.}} \hat{A}_{0}(\omega) \\
 & & +\underbrace{\hat{\eta}\hat{A}_{0}(\omega) \bigl( 1-\tilde{\eta} \bigr)}_{\mbox{s.o.}} \bigl( \hat{V}_c +\hat{V}_x \bigr) \eta \hat{A}_{0}(\omega)
 +\underbrace{\hat{\eta}\hat{A}_{0}(\omega) \bigl( 1-\tilde{\eta} \bigr)}_{\mbox{s.o.}} \bigl( \hat{V}_c +\hat{V}_x \bigr) \bigl( 1-\eta \bigr) \hat{A} _{0}(\omega) ,
\end{eqnarray*}
with $\hat{\eta} \prec \tilde{\eta} \prec \eta$. The bracketed expressions represent smoothing operators (s.o.), see the corresponding discussion in Section \ref{edgesingularities}, and the second, third, and fourth terms can be safely ignored if one considers the singular asymptotic behavior modulo smooth contributions. Thus we get the asymptotic expression
\[
 -\hat{\eta} \hat{G}_{1p}(\omega) \sim \hat{\eta}\hat{A}_{0}(\omega) \tilde{\eta} \bigl( \hat{V}_c +\hat{V}_x \bigr) \eta \hat{A}_{0}(\omega)  \quad \mbox{modulo s.o.} .
\]
Taking $R^{+}(\omega) :=\hat{A}_{0}(\omega)$, we can now apply Lemma \ref{lemmaRcone} and get
\[
 -\hat{\eta} \hat{G}_{1p}(\omega) \sim \hat{\eta} \bigl( R_{\gamma}(\omega,\mu) + \cdots \bigr) \tilde{\eta} \bigl( \hat{V}_c +\hat{V}_x \bigr) \eta \bigl( R_{\gamma}(\omega,\mu) + \cdots \bigr) + \cdots
\]
If we restrict ourselves to those terms that  
potentially contribute to the singular asymptotics, see~Remark \ref{remarkgreen}, we get 
\begin{equation}
 -\hat{\eta} \hat{G}_{1p}(\omega) \sim \hat{\eta} \Pop_{\gamma} \tilde{\eta} \hat{V}_c \eta \Pop_{\gamma} +\hat{\eta} \Pop_{\gamma}  \tilde{\eta} \hat{V}_x \eta \Pop_{\gamma}  + \cdots 
\label{G1psim}
\end{equation}
in leading order of the asymptotic expansion.
Alternatively, the parameterices in the asymptotic expansion can be considered as classical pseudo-differential operators, where the symbol of the first term belongs to the H\"ormander class $S^{-4}(\mathbb{R}^3 \times \mathbb{R}^3)$ and the symbol of the second term belongs to the H\"ormander class $S^{-6}(\mathbb{R}^3 \times \mathbb{R}^3)$. The classification scheme for symbols translates to the asymptotic smoothness classification for the corresponding kernel functions, discussed in Section \ref{asymptoticsmoothness}.

\subsection{2nd order terms of the one-particle Green's function}
\label{2ndGreen}
The approximate one-particle Green's function in 2nd order perturbation theory is of particular importance for applications in quantum chemistry, see e.g., \cite{NSC84,OT16,PPOV17,TPPV19} and the references cited therein.  
From a mathematical point of view, an interesting new feature appears in 2nd order Feynman diagrams where partial traces 
of resolvents complicate the asymptotic analysis. Apart from that, we can apply techniques from singular analysis, similar to those used in 1st order perturbation theory.

Let us consider the 2nd order $2p1h$ contribution of a particle system. It is given in physics notation by
\begin{equation}
 G_{2p1h}(x,\tilde{x},\omega) = -\sum_{j,a,b,c,d} \phi_c(x) \phi_d(\tilde{x}) \frac{\langle cj|V^{(2)}|ab \rangle \langle ab|V^{(2)}|dj \rangle -\langle cj|V^{(2)}|ab \rangle \langle ab|V^{(2)}|jd \rangle}{(\varepsilon_c-\omega -i\eta) (\varepsilon_a+\varepsilon_b-\varepsilon_j-\omega -i\eta) (\varepsilon_d-\omega -i\eta)} ,
\label{G2p1h}
\end{equation}
see \cite{NO88} [Eq.~3.58a].
To give it a rigorous formulation, in addition to (\ref{Apomega}) and the multiplication operator $\hat{V}^{(2)}$ that denotes multiplication with a singular two-particle Coulomb-type potential, let us introduce the operator
\begin{equation}
 \hat{S}_{j}^{+}(\omega) := \Qvirt \otimes \Qvirt \frac{1}{{\mathfrak h}_1 +{\mathfrak h}_2 -\varepsilon_j-\omega-i\eta} \Qvirt \otimes \Qvirt .
\label{Sijomega}
\end{equation}
The two-particle operator we want to consider in the following is given by
\begin{equation}
 \hat{H}_j^{+} (\omega) :=\hat{V}^{(2)} \, \hat{S}_{j}^{+}(\omega) \, \hat{V}^{(2)} , 
\label{Hj}
\end{equation}
whose kernel function is given by
\[
 H_j^{+}(x_1,x_2|x_3,x_4;\omega) =\sum_{a,b} \frac{\phi_a(x_1)\phi_b(x_2)V^{(2)}(x_1,x_2)V^{(2)}(x_3,x_4)\phi_a(x_3)\phi_b(x_4)}{\varepsilon_a+\varepsilon_b-\varepsilon_j-\omega -i\eta} .
\]
For an asymptotic analysis of (\ref{G2p1h}), we need to consider the partially contracted kernel functions
\begin{equation}
 H^{+}_d(x_1,x_3;\omega) :=\sum_j \int \phi_j(x_2) H_j^{+}(x_1,x_2|x_3,x_4;\omega) \phi_j(x_4) \, dx_2 dx_4 ,
\label{sumjHjd}
\end{equation}
\begin{equation}
 H^{+}_x(x_1,x_3;\omega) :=\sum_j \int \phi_j(x_2) H_j^{+}(x_1,x_2|x_4,x_3;\omega) \phi_j(x_4) \, dx_2 dx_4 ,
\label{sumjHje}
\end{equation}
representing the direct and exchange parts in (\ref{G2p1h}), respectively. The kernel functions (\ref{sumjHjd}) and (\ref{sumjHje}) represent up to a constant the second order $2p1h$ part of the self-energy, see \cite{NO88} [Eq.~3.58b].
Let us introduce the operators $\hat{H}^{+}_d(\omega)$, $\hat{H}^{+}_x(\omega)$, which are defined by their kernel functions (\ref{sumjHjd}) and (\ref{sumjHje}), respectively. Now we can define the operator
\[
 \hat{G}_{2p1h}(\omega) =-\hat{A}^{+}(\omega) \bigl( \hat{H}^{+}_d(\omega) -\hat{H}^{+}_x(\omega) \bigr) \hat{A}^{+}(\omega) ,
\]
which has the one-particle Green's function (\ref{G2p1h}) as its kernel function. 

In the following we consider (\ref{Hj}) and its kernel function from the point of view of singular analysis and give a rigorous asymptotic expression for (\ref{sumjHjd}) and (\ref{sumjHje}). In particular, we show that it can be considered, modulo smooth terms, as a parameter-dependent kernel function of a classical pseudo-differential operator in the H\"ormander class $S^{-3}(\mathbb{R}^3 \times \mathbb{R}^3)$.

\subsection{Singular behavior of partially contracted kernel functions}
\label{singularbehavior}
Understanding the singular behavior of the  partially contracted kernel functions (\ref{sumjHjd}), (\ref{sumjHje}) requires careful analysis that takes into account the various overlapping singular strata involved.
To get started, it is useful to consider a simple example where the partial contraction can be calculated analytically. 

\begin{example}
\label{example2} 
As an analytical example of a partially contracted kernel function, consider the fundamental solution of the $3d$ Laplace operator, given by
\begin{equation}
 H(x,\tilde{x}) =-\frac{1}{4\pi |x-\tilde{x}|} \quad \mbox{with} \ x:=(x_1,x_2,x_3), \ \tilde{x}:=(\tilde{x}_1,\tilde{x}_2,\tilde{x}_3) ,
\label{fsL3d}
\end{equation}
and perform the contraction in $1d$ with respect to the Gaussion function $\phi(x_3) =e^{-x_3^2}$, $\phi(\tilde{x}_3) =e^{-\tilde{x}_3^2}$, i.e.~we consider the kernel function 
\[
 K(x_1,x_2|\tilde{x}_1,\tilde{x}_2) := -\frac{1}{4\pi} \int_{-\infty}^{\infty} \frac{\phi(x_3) \phi(\tilde{x}_3)}{|x-\tilde{x}|} dx_3d\tilde{x}_3 ,
\]
which can be calculated analytically. For this we use the well-known integral representation
\begin{equation}
 \frac{1}{|x-\tilde{x}|} =\frac{1}{\sqrt{\pi}} \int_{-\infty}^{\infty} e^{-|x-\tilde{x}|^2t^2} dt ,
\label{CG}
\end{equation}
and determine the integral
\begin{eqnarray}
\nonumber
 I(t) & = & \frac{1}{\sqrt{\pi}} \int_{-\infty}^{\infty} e^{-x_3^2}  \left[ \int_{-\infty}^{\infty} e^{-(x_3-\tilde{x}_3)^2t^2} e^{-\tilde{x}_3^2} d\tilde{x}_3 \right] dx_3 \\ \nonumber
 & = & \frac{1}{\sqrt{\pi}} \int_{-\infty}^{\infty} e^{-x_3^2}  \left[ \sqrt{\frac{\pi}{1+t^2}} \, e^{-\frac{t^2}{1+t^2}x_3^2} \right] dx_3 \\ \label{It}
 & = & \sqrt{\frac{\pi}{1+2t^2}} .
\end{eqnarray}
With this, we can calculate the partially contracted kernel function
\begin{eqnarray}
\nonumber
 K(x_1,x_2|\tilde{x}_1,\tilde{x}_2) & = & -\frac{1}{4\pi} \int_{-\infty}^{\infty} \sqrt{\frac{\pi}{1+2t^2}} e^{-|x_{12}|^2 t^2} dt, \ \mbox{with} \ x_{12} :=(x_1-\tilde{x}_1,x_2-\tilde{x}_2) \\ \label{K2d}
 & = & -\frac{1}{4\sqrt{2\pi}} e^{\frac{|x_{12}|^2}{4}} K_0 \left( \frac{|x_{12}|^2}{4} \right) ,
\end{eqnarray}
where we used \cite{GR07} [3.462 (25)]  in the second step. Using the  asymptotic expansion \cite{Agmon75} [9.6.8] for the modified Bessel function $K_0$, we get
\[
 K(x_1,x_2|\tilde{x}_1,\tilde{x}_2) \sim \frac{1}{2\sqrt{2\pi}} \ln |x_{12}| .
\]
Such singular asymptotic behavior is quite gratifying because it is proportional to the Green function of the $2d$ Laplace operator. \footnote{This example shows that it is essential to contract with smeared-out functions $\phi$, because contraction with a delta distribution $\phi(x)=\delta (x-a)$, $a \in \mathbb{R}$, results in a kernel function 
$ K(x_1,x_2|\tilde{x}_1,\tilde{x}_2) = -\frac{1}{4\pi |x_{12}|}$.
Thus, this allows to treat a singular $2d$ kernel function in our calculus.} 
\end{example}

However, in our envisioned applications, such direct analytic calculations of partial contractions of kernel functions are not feasible. Instead we rely on techniques from singular analysis, briefly discussed in Section \ref{PsiDO}, which have already been used to analyze the asymptotic behavior of pair amplitudes \cite{FHS15,FFH17,FFH21} in coupled cluster theory. Such an approach  requires as an intermediate step a full contraction with an appropriate tensor product. To illustrate our approach, we reconsider the previous example. 

\begin{example}
\label{example3}
Given the kernel function (\ref{fsL3d}) from Example \ref{example2} and the two Gaussian functions
\[
 \delta_{a,\beta}(x) := \frac{\beta}{\pi} e^{-\beta|x-a|^2}, \quad \quad \delta_{b,\beta}(x) := \frac{\beta}{\pi} e^{-\beta|x-b|^2}, \quad x:=(x_1,x_2) \in \mathbb{R}^2
\]
which can in the limit $\beta \rightarrow \infty$ be considered as approximations to Dirac $delta$ distributions with support at $a:=(a_1,a_2)$, $b:=(b_1,b_2)$, respectively. 
The partially contracted kernel function (\ref{K2d}) can be recovered in the limit $\beta \rightarrow \infty$ from the $3d$-integral 
\[
 K(a_1,a_2|b_1,b_2) = \lim_{\beta \rightarrow \infty} -\frac{1}{4\pi} \int_{\mathbb{R}^3} \frac{\delta_{a,\beta}(x_1,x_2) \otimes \phi(x_3) \cdot \delta_{a,\beta}(\tilde{x}_1,\tilde{x}_2) \otimes \phi(\tilde{x}_3)}{|x-\tilde{x}|} dxd\tilde{x} .
\]
This is shown in Appendix \ref{partialcontraction} via an explicit calculation.
\end{example}  

Now consider the operators (\ref{Sijomega}) and (\ref{Hj}), where (\ref{Sijomega}) is sandwiched between two singular multiplicative operators $\hat{V}^{(2)}$. In order not to get lost in technicalities, we make some simplifications concerning these operators. Given
\[
 \Qvirt \otimes \Qvirt = I \otimes I -I \otimes \Qocc -\Qocc \otimes I +\Qocc \otimes \Qocc ,
\] 
we get for (\ref{Sijomega}) the decomposition
\begin{multline*}
 \hat{S}_{j}^{+}(\omega) := \bigl( {\mathfrak h}_1 +{\mathfrak h}_2 -\varepsilon_j-\omega-i\eta \bigr)^{-1}
  -I \otimes \Qocc \bigl( {\mathfrak h}_1 +{\mathfrak h}_2 -\varepsilon_j-\omega-i\eta \bigr)^{-1} I \otimes \Qocc \\
  \quad \quad \quad \quad -\Qocc \otimes I \bigl( {\mathfrak h}_1 +{\mathfrak h}_2 -\varepsilon_j-\omega-i\eta \bigr)^{-1} \Qocc \otimes I
  +\Qocc \otimes \Qocc \bigl( {\mathfrak h}_1 +{\mathfrak h}_2 -\varepsilon_j-\omega-i\eta \bigr)^{-1} \Qocc \otimes \Qocc ,
\end{multline*}
where the second and third terms correspond to effective single-particle operators and the fourth term is of finite rank. In the following, we restrict our discussion to the first term of this decomposition and 
and apply the expansion (\ref{hv2hexpansion}) to it.
Since we have assumed smooth single-particle potentials $v$ from the beginning, it is clear that the expansion generates a series of operators of increasing smoothness. Therefore, we restrict ourselves to the first term of the expansion. Furthermore, we assume that the multiplicative operator $\hat{V}^{(2)}$,
depends only on the interparticle distance, i.e.~it represents a bare or screened Coulomb potential with the asymptotic expansion
\[
 V^{(2)}(x_1,x_2) \sim \frac{1}{|x_1-x_2|} + {\cal O}(1).
\]
To summarize our simplifications, instead of (\ref{Hj}) we consider the operator 
\begin{equation}
 \hat{H}^{(0)}(\omega) :=\hat{V}^{(2)} \bigl( {\mathfrak h}_{0,1} +{\mathfrak h}_{0,2} -\varepsilon-\omega-i\eta \bigr)^{-1} \hat{V}^{(2)} .
\label{H0j}
\end{equation}
Motivated by Example \ref{example3}, we define the $3d$ test functions
\[
 \delta_{a,\beta}(x) := \left( \frac{\beta}{\pi} \right)^{\frac{3}{2}} e^{-\beta|x-a|^2}, \quad \quad \delta_{b,\beta}(x) := \left( \frac{\beta}{\pi} \right)^{\frac{3}{2}} e^{-\beta|x-b|^2}, \quad x \in \mathbb{R}^3
\]
and the integral corresponding to the direct part
\begin{equation}
  K_{\beta}(a,b) := \lim_{\beta \rightarrow \infty} \iint_{\mathbb{R}^3 \times \mathbb{R}^3} \delta_{a,\beta}(x_1) \phi(x_2) \bigl( \hat{H}^{(0)}(\omega) \delta_{b,\beta} \otimes \phi \bigr) (x_1,x_2) \, dx_1dx_2 .
  \label{Kbeta}
\end{equation}
and consider the asymptotic behavior of the expression
\begin{equation}
  K(a,b) := \lim_{\beta \rightarrow \infty} K_{\beta}(a,b).
\label{Kab}
\end{equation}
The asymptotics of $K$ can be obtained in several steps. In the first step, we consider
\begin{equation}
 \Psi_{b,\beta}(x_1,x_2) := \bigl( {\mathfrak h}_{0,1} +{\mathfrak h}_{0,2} -\varepsilon-\omega-i\eta \bigr)^{-1} \hat{V}^{(2)} \delta_{b,\beta}(x_1) \otimes \phi(x_2) ,
\label{h1h2int}
\end{equation}
for which we can determine the asymptotic behavior by considering the corresponding PDE
\begin{equation}
 \bigl( {\mathfrak h}_{0,1} +{\mathfrak h}_{0,1} -\varepsilon-\omega-i\eta \bigr) \Psi_{b,\beta}(x_1,x_2) = \frac{\delta_{b,\beta}(x_1) \phi(x_2)}{|x_1-x_2|} ,
\label{h1h2pde}
\end{equation}
in the context of singular analysis. To do this, we need to introduce
appropriate coordinates that reflect the singular structure. These are center of mass coordinates given by
\[
 z_1(x_1,x_2) = \tfrac{1}{\sqrt{2}} (x_1 - x_2), \quad z_2(x_1,x_2) = \tfrac{1}{\sqrt{2}} (x_1 + x_2) .
\]
Within the pseudo-differential algebra outlined in Section \ref{PsiDO}, (\ref{h1h2pde}) can be solved by an asymptotic parametrix construction and corresponding Green operators, cf.~Lemma \ref{lemmaRcone} and Proposition \ref{propositionRedge}, with leading order term\footnote{For notational simplicity we suppress the $\varepsilon-$ and $\omega$-dependence of the quantities involved.} given by
\[
 \Psi_{b,\beta} \sim \Pop \bigl( V^{(2)} \delta_{b,\beta} \otimes \phi \bigr) -\Gop \Psi_{b,\beta} ,
\]
where $\Psi_{b,\beta}$ corresponds to $R^{(2),+}_0(\varepsilon+\omega)\bigl( V^{(2)} \delta_{b,\beta} \otimes \phi \bigr)$ in the notation of Section \ref{edgesingularities}.
An explicit calculation of the asymptotic parametrix and the Green operator, see \cite{FFH17},
shows that the  asymptotic expansion\footnote{The first term in the asymptotic expansion vanishes for an electron pair in a triplet state. The asymptotic expansion takes into account only terms with zero relative angular momentum, as indicated by the projection operator $P_0$. For further details see \cite{FFH17}.}
\begin{equation}
 P_0 \tilde{\Psi}_{b,\beta}(z_1,z_2) \sim \tilde{\Psi}_{b,\beta}(0,z_2) +\tfrac{1}{\sqrt{2}} |z_1| \phi \bigl( \tfrac{1}{\sqrt{2}} z_2 \bigr) \delta_{b,\beta} \bigl( \tfrac{1}{\sqrt{2}} z_2 \bigr) +{\cal O}(|z_1|^2) 
\label{P0tildePsi}
\end{equation}
exists, where we define $\tilde{\Psi}_{b,\beta} \bigl( z_1(x_1,x_2),z_2(x_1,x_2) \bigr) := \Psi_{b,\beta}(x_1,x_2)$.
This gives us the asymptotic expansion\begin{multline}
 K_{\beta}(a,b) \sim  \iint_{\mathbb{R}^3 \times \mathbb{R}^3} \frac{\delta_{a,\beta}(x_1) \phi(x_2)}{|x_1-x_2|} \biggl[ \Psi_{b,\beta} \bigl( \tfrac{1}{2} (x_1 + x_2),\tfrac{1}{2} (x_1 + x_2) \bigr) \\
 +\tfrac{1}{2} |x_1-x_2| \phi \bigl( \tfrac{1}{2} (x_1+x_2) \bigr) \delta_{b,\beta} \bigl( \tfrac{1}{2} (x_1+x_2) \bigr) +{\cal O}(|z_1|^2) \biggr] \, dx_1dx_2
\label{Kbeta-asymp}
\end{multline}
for (\ref{Kbeta}) and we treat the contributions of the terms in the asymptotic expansion (\ref{P0tildePsi}) separately.
First, we deal with the parameter-dependent integral
\begin{eqnarray}
\nonumber
 I_1(a,b) & := & \lim_{\beta \rightarrow \infty} \iint_{\mathbb{R}^3 \times \mathbb{R}^3} \frac{\delta_{a,\beta}(x_1) \phi(x_2)}{|x_1-x_2|} \Psi_{b,\beta} \bigl( \tfrac{1}{2} (x_1 + x_2),\tfrac{1}{2} (x_1 + x_2) \bigr) \, dx_1dx_2 \\ \label{I1ab}
 & = & \lim_{\beta \rightarrow \infty} 4 \iint_{\mathbb{R}^3 \times \mathbb{R}^3} \frac{\delta_{a,\beta}(\tilde{x}_1) \phi(2\tilde{x}_2-\tilde{x}_1)}{|\tilde{x}_1-\tilde{x}_2|} \Psi_{b,\beta}(\tilde{x}_2,\tilde{x}_2) \, d\tilde{x}_1d\tilde{x}_2 .
\end{eqnarray}
Its asymptotic behavior for $|a-b| \rightarrow 0$ depends on the properties of the family of functions $\Psi_{b,\beta}$ along the diagonal, for $\beta$ sufficiently large. To illustrate our problem, we focus on the leading order term of the singular asymptotic expansion of this family. We write (\ref{h1h2int}) as an integral equation, i.e.
\begin{equation}
 \Psi_{b,\beta}(x_1,x_2) := \iint_{\mathbb{R}^3 \times \mathbb{R}^3} K_{12}(x_1,x_2\, | \, \hat{x}_1,\hat{x}_2) \frac{\delta_{b,\beta}(\hat{x}_1) \phi(\hat{x}_2)}{|\hat{x}_1-\hat{x}_2|} \, d\hat{x}_1d\hat{x}_2 ,
\label{h1h2kernel}
\end{equation}
where the kernel function $K_{12}$ of the shifted inverse Laplace operator
\[
 \bigl( {\mathfrak h}_{0,1} +{\mathfrak h}_{0,1} -\varepsilon-\omega-i\eta \bigr)^{-1} ,
\]
has an asymptotic behavior of the form\footnote{The kernel function $K_{12}$ can be obtained by analytic continuation of a fundamental solution of the shifted Lapalce operator in $6d$, i.e., $\Delta_{6}-\kappa^2$, given by 
\[
 -(2\pi)^{-3} \kappa^{2} r^{-2} K_{2}(\kappa r) ,
\]
where $K_{2}$ denotes a modified Bessel function of the second kind,
see e.g.~\cite{Schwartz} [Section II, \S 3]
The modified Bessel function has an asymptotic behavior for $r \rightarrow 0$, cf.~\cite{AS}, of the form
$K_{2}(\kappa r) \sim 2 \left( \kappa r \right)^{-2}$. 
}
\begin{equation}
 K_{12}(x_1,x_2\, | \, \hat{x}_1,\hat{x}_2) \sim \frac{c_{12}}{\bigl( |x_1-\hat{x}_1|^2 +|x_2-\hat{x}_2|^2 \bigr)^2} .
\label{K12asymp}
\end{equation}
We thus restrict our discussion to the leading order asymptotic expression
\begin{eqnarray}
\nonumber
 \Psi^{(0)}_{b,\beta}(\tilde{x}_2,\tilde{x}_2) & \sim & \iint_{\mathbb{R}^3 \times \mathbb{R}^3} \frac{c_{12}}{\bigl( |\tilde{x}_2-\hat{x}_1|^2 +|\tilde{x}_2-\hat{x}_2|^2 \bigr)^2} \frac{\delta_{b,\beta}(\hat{x}_1) \phi(\hat{x}_2)}{|\hat{x}_1-\hat{x}_2|} \, d\hat{x}_1d\hat{x}_2 \\ \label{Psi0bbeta}
 & = & \iint_{\mathbb{R}^3 \times \mathbb{R}^3} \frac{c_{12}}{|(\check{x}_1,\check{x}_2)|^4} \frac{\delta_{b,\beta}(\check{x}_1+\tilde{x}_2) \phi(\check{x}_2+\tilde{x}_2)}{|\check{x}_1-\check{x}_2|} \, d\check{x}_1d\check{x}_2 .
\end{eqnarray}
In the following lemma we give an analytic expression for (\ref{Psi0bbeta}) with $\phi(x)=e^{-x^2}$ at $b=0$, where the Coulomb singularity has its maximal effect according to (\ref{h1h2pde}).

\begin{lemma}
\label{lemmaPsi0beta}
For a test function $\phi(x)=e^{-x^2}$, the leading order asymptotic expression (\ref{Psi0bbeta}) at $b=0$, i.e.
\begin{equation}
 \Psi^{(0)}_{0,\beta}(\tilde{x}_2,\tilde{x}_2) := \iint_{\mathbb{R}^3 \times \mathbb{R}^3} \frac{c_{12}}{|(\check{x}_1,\check{x}_2)|^4} \frac{\delta_{0,\beta}(\check{x}_1+\tilde{x}_2) e^{-(\check{x}_2+\tilde{x}_2)^2}}{|\check{x}_1-\check{x}_2|} \, d\check{x}_1d\check{x}_2 ,
\label{Psi0beta}
\end{equation}
is given by
\begin{eqnarray}
\label{pcf}
 \Psi^{(0)}_{0,\beta}(\tilde{x}_2,\tilde{x}_2) & = & 2 c_{12} \sqrt{\pi \beta} |\tilde{x}_2|^{-2} e^{-(\beta+1) |\tilde{x}_2|^{2}} \int_0^{\frac{\pi}{2}} \left\{ \sqrt{a\pi} \left[ e^{\frac{1}{2} v^2(r)} -e^{\frac{1}{2} w^2(r)} \right] \right. \\ \nonumber & & \left. +\sqrt{\tfrac{a}{2}} \pi \left[ w(r) e^{\frac{1}{4} w^2(r)} V \bigl( -\tfrac{1}{2},w(r) \bigr) -v(r) e^{\frac{1}{4} v^2(r)} V \bigl( -\tfrac{1}{2},v(r) \bigr) \right] \right\} \cos(r) \, dr \\ \label{dawson}
 & = & 2 c_{12} \sqrt{\pi \beta} |\tilde{x}_2|^{-2} e^{-(\beta+1) |\tilde{x}_2|^{2}} \int_0^{\frac{\pi}{2}} \left\{ \sqrt{a\pi} \left[ e^{\frac{1}{2} v^2(r)} -e^{\frac{1}{2} w^2(r)} \right] \right. \\ \nonumber & & \left. +\sqrt{2\pi a} \left[ w(r) e^{\frac{1}{2} w^2(r)} F \bigl( w(r)/\sqrt{2} \bigr) -v(r) e^{\frac{1}{2} v^2(r)} F \bigl( v(r)/\sqrt{2} \bigr) \right] \right\} \cos(r) \, dr ,
\end{eqnarray}
with
\begin{equation}
 v(r) := \tfrac{2 \bigl( \beta \sin(r) -\cos(r) \bigr) |\tilde{x}_2|}{\sqrt{2 \bigl( \beta \sin^2(r) +\cos^2(r) \bigr)}}, \quad w(r) := \tfrac{2 \bigl( \beta \sin(r) +\cos(r) \bigr) |\tilde{x}_2|}{\sqrt{2 \bigl( \beta \sin^2(r) +\cos^2(r) \bigr)}} ,
\label{vwr}
\end{equation}
where $V\bigl( -\tfrac{1}{2},\cdot \bigr)$ denotes a parabolic cylinder function and $F(s) := e^{-s^2} \int_0^s e^{t^2} \, dt$ is the Dawson integral\footnote{The parabolic cylinder function and the Dawson integral are related by $V\bigl( -\tfrac{1}{2},t \bigr) =\frac{2}{\sqrt{\pi}} e^{\frac{1}{4} t^2} F(t/\sqrt{2})$.}, see \cite{AS}.
In the asymptotic limit $\tilde{x}_2 \rightarrow 0$ we get
\begin{equation}
 \Psi^{(0)}_{0,\infty}(\tilde{x}_2,\tilde{x}_2) := \lim_{\beta \rightarrow \infty} \Psi^{(0)}_{0,\beta}(\tilde{x}_2,\tilde{x}_2) \sim c_{12} \pi^2 |\tilde{x}_2|^{-2} e^{-|\tilde{x}_2|^{2}} .
\label{betainf}
\end{equation}
For a finite fixed value of $\beta$, $\Psi^{(0)}_{0,\beta}$ remains bounded along the diagonal, and we get  
\begin{equation}
 \lim_{\tilde{x}_2 \rightarrow 0} \Psi^{(0)}_{0,\beta}(\tilde{x}_2,\tilde{x}_2) = 2 c_{12} \pi^2 \beta .
\label{tildex0}
\end{equation}
\end{lemma}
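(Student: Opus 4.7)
The plan is to compute (\ref{Psi0beta}) by reducing it to an elementary Gaussian integral via the parameter representations
\[
 \frac{1}{(|\check{x}_1|^2+|\check{x}_2|^2)^2} = \int_0^\infty u\, e^{-u(|\check{x}_1|^2+|\check{x}_2|^2)}\, du, \qquad \frac{1}{|\check{x}_1-\check{x}_2|} = \frac{2}{\sqrt{\pi}} \int_0^\infty e^{-s^2|\check{x}_1-\check{x}_2|^2}\, ds.
\]
After exchanging the order of integration (justified by the Gaussian decay in $(\check{x}_1,\check{x}_2)$ together with local integrability of the $(u,s)$-kernel), the spatial integration becomes a pure Gaussian in $\mathbb{R}^6$ with coefficient matrix
\[
 M(u,s) = \begin{pmatrix} \beta+u+s^2 & -s^2 \\ -s^2 & 1+u+s^2 \end{pmatrix}
\]
acting blockwise on $(\check{x}_1,\check{x}_2)$, and with linear part $b=(2\beta\tilde{x}_2,\,2\tilde{x}_2)$ coming from the centers at $-\tilde{x}_2$ of the two Gaussians. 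The standard completing-the-square identity yields $\int e^{-Q}\,d\check{x}_1 d\check{x}_2 = \pi^3 (\det M)^{-3/2}\exp(\tfrac{1}{4} b^T M^{-1} b - c)$ with $\det M(u,s) = \beta + (\beta+1)(u+s^2) + u(u+2s^2)$, and a short computation reduces the completed-square exponent to $-u|\tilde{x}_2|^2[(\beta+1)(u+2s^2)+2\beta]/\det M$. Splitting this as $-(\beta+1)|\tilde{x}_2|^2 + |\tilde{x}_2|^2[\beta(\beta+1)+(\beta^2+1)u+(\beta+1)^2 s^2]/\det M$ accounts for the overall factor $e^{-(\beta+1)|\tilde{x}_2|^2}$ appearing in (\ref{pcf}).

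Second, to pass from the resulting $(u,s)$-integral to the single $r$-integral of (\ref{pcf}), I would introduce polar-type coordinates $(\rho,r)\in\mathbb{R}_+\times[0,\pi/2]$ in the $(u,s^2)$-plane adapted so that the combination $\beta\sin^2 r+\cos^2 r$ emerges as the relevant scale in $\det M$. The substitution is dictated by the requirement that, after completing the square in $\rho$, the one-dimensional radial integral produces precisely the two quadratic exponents $\tfrac{1}{2}v(r)^2$ and $\tfrac{1}{2}w(r)^2$ with linear shifts proportional to $(\beta\sin r\mp\cos r)|\tilde{x}_2|/\sqrt{2(\beta\sin^2 r+\cos^2 r)}$. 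Each half-line Gaussian $\int_0^\infty e^{-p\rho^2+q\rho}\,d\rho$ then contributes an exponential prefactor of the form $e^{q^2/(4p)}$ together with an error-function-type tail; identifying the tails with the Dawson integrals $F(v/\sqrt{2}),\,F(w/\sqrt{2})$ and using the footnote identity $V(-\tfrac{1}{2},t)=(2/\sqrt{\pi})e^{t^2/4}F(t/\sqrt{2})$ produces the two equivalent forms (\ref{pcf}) and (\ref{dawson}).

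For the $\beta\to\infty$ limit (\ref{betainf}) I would bypass (\ref{pcf}) and work directly from (\ref{Psi0beta}). Since $\delta_{0,\beta}(\check{x}_1+\tilde{x}_2)\to\delta(\check{x}_1+\tilde{x}_2)$ in the sense of distributions as $\beta\to\infty$, and since for $\tilde{x}_2\neq 0$ the remaining integrand is continuous at $\check{x}_1=-\tilde{x}_2$, dominated convergence (with the Gaussian $\phi$ providing decay and $|(\check{x}_1,\check{x}_2)|^{-4}|\check{x}_1-\check{x}_2|^{-1}$ being locally $L^1$ on the slice $\{\check{x}_1=-\tilde{x}_2\}$) yields
\[
 \Psi^{(0)}_{0,\infty}(\tilde{x}_2,\tilde{x}_2) = c_{12}\int_{\mathbb{R}^3} \frac{e^{-|\check{x}_2+\tilde{x}_2|^2}}{\bigl(|\tilde{x}_2|^2+|\check{x}_2|^2\bigr)^{2}\,|\tilde{x}_2+\check{x}_2|}\, d\check{x}_2.
\]
The leading singular contribution as $|\tilde{x}_2|\to 0$ comes from $\check{x}_2\approx-\tilde{x}_2$, where $|\tilde{x}_2+\check{x}_2|^{-1}$ concentrates. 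The rescaling $\check{x}_2=-\tilde{x}_2+|\tilde{x}_2|y$ extracts the factor $|\tilde{x}_2|^{-2}$, the Gaussian at the concentration point furnishes $e^{-|\tilde{x}_2|^2}$, and the residual bounded integral over $y$ is evaluated by elementary spherical integration to produce the stated constant.

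For (\ref{tildex0}) at fixed $\beta$, I would Taylor-expand (\ref{dawson}) for small $v(r),w(r)$ using $F(\sigma)=\sigma-\tfrac{2}{3}\sigma^3+O(\sigma^5)$ and $e^{\sigma^2/2}=1+\tfrac{1}{2}\sigma^2+O(\sigma^4)$; both bracketed differences then vanish to order $|\tilde{x}_2|^2$ (with dominant contribution from the $v^2-w^2$ cross term), cancelling the prefactor $|\tilde{x}_2|^{-2}$, and the residual $r$-integral over $[0,\pi/2]$ is elementary. The main obstacle is step two: reverse-engineering the exact change of variables $(u,s)\mapsto(\rho,r)$ and tracking its Jacobian so that the half-line Gaussian integrals collapse onto precisely the pattern in (\ref{pcf}). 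Once the substitution is identified, the remaining manipulations are routine identities for the Dawson and parabolic cylinder functions, and the two asymptotic formulas follow from straightforward small- and large-parameter expansions of the resulting closed-form expression.
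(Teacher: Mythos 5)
Your plan does not establish the central claim of the lemma, namely the closed form (\ref{pcf})--(\ref{dawson}). The Schwinger-parameter setup (the representations of $|(\check{x}_1,\check{x}_2)|^{-4}$ and $|\check{x}_1-\check{x}_2|^{-1}$, the matrix $M(u,s)$, its determinant, and the completed-square exponent) is correct as far as it goes, but the decisive step --- showing that the remaining two-dimensional $(u,s)$-integral, whose exponent is a ratio of quadratics in $(u,s^2)$, collapses under some polar-type substitution into a single $r$-integral with precisely the arguments $v(r),w(r)$ of (\ref{vwr}) and the parabolic-cylinder/Dawson structure --- is only postulated, and you yourself flag it as needing to be ``reverse-engineered''. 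The paper proceeds differently: it passes to hyperspherical coordinates $x_1=t\sin(r)\Phi(\theta_1,\phi_1)$, $x_2=t\cos(r)\Phi(\theta_2,\phi_2)$ in the physical variables, performs the angular integrations against the shifted Gaussians (producing $\sinh$ factors), and evaluates the radial $t$-integral by partial integration together with the tabulated integrals $\int_0^\infty e^{-at^2}\cosh(\gamma t)\,dt$ and $\int_0^\infty e^{-at^2}\sinh(\gamma t)\,t^{-1}dt$ expressed through parabolic cylinder functions; the hyperangle $r$ is what survives in (\ref{pcf}). Without the explicit substitution and its Jacobian, the main identity of the lemma remains unproven, and it is not evident that your parametric integral reduces to a one-dimensional integral of that specific form at all.

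There is also a concrete problem in your derivation of (\ref{betainf}). The localization is misattributed: $|\tilde{x}_2+\check{x}_2|^{-1}$ is locally integrable and does not ``concentrate''; the $|\tilde{x}_2|^{-2}$ blow-up comes from the quartic factor $(|\tilde{x}_2|^2+|\check{x}_2|^2)^{-2}$, which is critical at $\check{x}_2=0$ as $\tilde{x}_2\to 0$. More importantly, carrying out your own rescaling exactly gives the residual integral
\[
\int_{\mathbb{R}^3}\frac{dy}{\bigl(1+|y|^2\bigr)^{2}\,|y+e|}=\frac{\pi^2}{2},\qquad |e|=1 ,
\]
so your route produces the constant $c_{12}\pi^2/2$ rather than the stated $c_{12}\pi^2$. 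The paper's constant arises in its hyperspherical computation, where the Coulomb factor $|\check{x}_1-\check{x}_2|^{-1}$ is replaced by $|\check{x}_1|^{-1}=(t\sin r)^{-1}$ (its leading-order value once the collapsing Gaussian pins $\check{x}_1$ near $-\tilde{x}_2$ and the region $|\check{x}_2|\lesssim|\tilde{x}_2|$ dominates); with that replacement the residual integral is $\int_{\mathbb{R}^3}(1+|y|^2)^{-2}dy=\pi^2$. Your ``exact'' distributional-limit treatment therefore does not reproduce (\ref{betainf}) as stated, and you would need either to justify this replacement or to resolve the factor-of-two discrepancy; note also that the paper obtains the $\beta\to\infty$ asymptotics from the closed form, via the large-argument expansion of $V\bigl(-\tfrac12,\cdot\bigr)$ and a Taylor expansion in $v-w$, not by a delta-limit argument. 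Your treatment of (\ref{tildex0}) by small-argument expansion of the Dawson integral is the same in spirit as the paper's Taylor expansion of $V\bigl(-\tfrac12,\cdot\bigr)$, but it presupposes the closed form you have not derived.
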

\begin{proof}
To perform the calulation, we introduce hyperspherical coordinates for two particles given by
\begin{equation}
 x_1 =t \sin(r) \Phi(\theta_1,\phi_1), \quad x_2 =t \cos(r) \Phi(\theta_2,\phi_2) ,
\label{hc-two}
\end{equation}
with radial variable $t:= \sqrt{|x_1|^2 + |x_2|^2}$ and standard spherical coordinates. 
In these coordinates we have the $6d$ volume element
\[
 dV=t^5\sin^2(r) \cos^2(r) \sin(\theta_1) \sin(\theta_2) dt dr d\phi_1 d\phi_2 d\theta_1 d\theta_2 .
\] 
Thus (\ref{Psi0beta}) becomes
\begin{eqnarray*}
 \Psi^{(0)}_{0,\beta}(\tilde{x}_2,\tilde{x}_2) & = & 4\pi^2  c_{12} \int_{0}^{\infty} \int_{0}^{\frac{\pi}{2}} \int_{0}^{\pi} \int_{0}^{\pi} \frac{1}{t^4} \frac{1}{t\sin(r)} \left( \frac{\beta}{\pi} \right)^{\frac{3}{2}} e^{-\beta \bigl( t^2 \sin^2(r) +|\tilde{x}_2|^2 +2t\sin(r)|\tilde{x}_2| \cos(\theta_1) \bigr)} \\
 & & \times e^{-\bigl( t^2 \cos^2(r) +|\tilde{x}_2|^2 +2t\cos(r)|\tilde{x}_2| \cos(\theta_2) \bigr)} t^5\sin^2(r) \cos^2(r) \sin(\theta_1) \sin(\theta_2) dt dr d\theta_1 d\theta_2 \\
 & = & 4\pi^2 c_{12} \left( \frac{\beta}{\pi} \right)^{\frac{3}{2}} e^{-(\beta +1) |\tilde{x}_2|} \int_{0}^{\infty} \int_{0}^{\frac{\pi}{2}} e^{-t^2\bigl( \beta \sin^2(r) +\cos^2(r) \bigr)} \\
 &  & \times \frac{\sinh \bigl( 2\beta t\sin(r)|\tilde{x}_2|\bigr)}{\beta t\sin(r)|\tilde{x}_2|} \frac{\sinh \bigl( 2t\cos(r)|\tilde{x}_2|\bigr)}{t\cos(r)|\tilde{x}_2|} \sin(r) \cos^2(r) dt dr ,
\end{eqnarray*}
and, introducing the variables
\[
 a:= \beta \sin^2(r) +\cos^2(r), \quad b:= 2 \beta \sin(r) |\tilde{x}_2|, \quad d:= 2 \cos(r) |\tilde{x}_2| ,
\]
we have
\begin{eqnarray}
\label{Psi0betaint}
 \Psi^{(0)}_{0,\beta}(\tilde{x}_2,\tilde{x}_2) & = & 2 c_{12} \sqrt{\pi \beta} |\tilde{x}_2|^{-2} e^{-(\beta +1) |\tilde{x}_2|^2} \\ \nonumber
 & & \times \int_{0}^{\infty} \int_{0}^{\frac{\pi}{2}} e^{-at^2} \left[ \cosh \bigl( (b+d) t \bigr) -\cosh \bigl( (b-d) t \bigr) \right] t^{-2} \cos(r) dr dt .
\end{eqnarray}
The integral with respect to $t$ can be evaluated analytically. As a start, partial integration gives 
\begin{eqnarray}
\label{Icosh}
 \lefteqn{\int_{0}^{\infty} e^{-at^2} \left[ \cosh \bigl( (b+d) t \bigr) -\cosh \bigl( (b-d) t \bigr) \right] t^{-2} dt} \hspace{3cm} \\ \nonumber
 & = & \underbrace{\left. -e^{-at^2} \left[ \cosh \bigl( (b+d) t \bigr) -\cosh \bigl( (b-d) t \bigr) \right] t^{-1} \right|_{0}^{\infty}}_{=0} \\ \nonumber
 & & -2a \int_{0}^{\infty} e^{-at^2} \left[ \cosh \bigl( (b+d) t \bigr) -\cosh \bigl( (b-d) t \bigr) \right] dt \\
 & &  +\int_{0}^{\infty} e^{-at^2} \left[ (b+b) \sinh \bigl( (b+d) t \bigr) -(b-d)\sinh \bigl( (b-d) t \bigr) \right] t^{-1} dt .\quad\quad
\end{eqnarray}
Next we use the formulas, cf.~\cite{GR07} [p.~390, Eq.~3562 (1.,2.)], \cite{AS} [19.14.4] and \cite{AS} [19.3.8],
\begin{eqnarray*}
 \int_{0}^{\infty} e^{-at^2} \cosh(\gamma t) dt & = & \tfrac{1}{2} \Gamma(1) \tfrac{1}{\sqrt{2a}} e^{\frac{\gamma^2}{8a}} \left[ D_{-1} \bigl(-\tfrac{\gamma}{\sqrt{2a}} \bigr) +D_{-1} \bigl(\tfrac{\gamma}{\sqrt{2a}} \bigr) \right] \\
 & = & \tfrac{\pi}{2} \tfrac{1}{\sqrt{2a}} e^{\frac{\gamma^2}{8a}} V \bigl( \tfrac{1}{2}, \tfrac{\gamma}{\sqrt{2a}} \bigr) \\
 & = & \tfrac{1}{2} \sqrt{\tfrac{\pi}{a}} e^{\frac{\gamma^2}{4a}}
\end{eqnarray*}
\begin{eqnarray*}
 \int_{0}^{\infty} e^{-at^2} \sinh(\gamma t) t^{-1} dt & = & \lim_{\mu \rightarrow 0} \tfrac{1}{2} \Gamma(2\mu) e^{\frac{\gamma^2}{8a}} \left[ D_{-2\mu} \bigl(-\tfrac{\gamma}{\sqrt{2a}} \bigr) +\sin \bigl( (2\mu-\tfrac{1}{2}) \pi \bigr) D_{-2\mu} \bigl(\tfrac{\gamma}{\sqrt{2a}} \bigr) \right] \\
 & = & \tfrac{\pi}{2} e^{\frac{\gamma^2}{8a}} V \bigl( -\tfrac{1}{2}, \tfrac{\gamma}{\sqrt{2a}} \bigr) ,
\end{eqnarray*}
where $D_{-2\mu}$ and $V$ denote parabolic cylinder functions, and obtain for the integral (\ref{Icosh}) the expression
\begin{eqnarray}
\nonumber
 \lefteqn{\int_{0}^{\infty} e^{-at^2} \left[ \cosh \bigl( (b+d) t \bigr) -\cosh \bigl( (b-d) t \bigr) \right] t^{-2} dt} \hspace{3cm} \\ \nonumber
 & = & \sqrt{a\pi} e^{\frac{(b-d)^2}{4a}} -\sqrt{a\pi} e^{\frac{(b+d)^2}{4a}} \\ \nonumber
 & & +\tfrac{\pi}{2}(b+d) e^{\frac{(b+d)^2}{8a}} V \bigl( -\tfrac{1}{2}, \tfrac{b+d}{\sqrt{2a}} \bigr) -\tfrac{\pi}{2}(b-d) e^{\frac{(b-d)^2}{8a}} V \bigl( -\tfrac{1}{2}, \tfrac{b-d}{\sqrt{2a}} \bigr) \\ \label{IntPsi0beta}
 & = & \sqrt{a\pi} \left[ e^{\frac{1}{2} v^2} -e^{\frac{1}{2} w^2} \right] \\ \nonumber
 & & +\sqrt{\tfrac{a}{2}} \pi \left[ w e^{\frac{1}{4} w^2} V \bigl( -\tfrac{1}{2},w \bigr) -v e^{\frac{1}{4} v^2} V \bigl( -\tfrac{1}{2},v \bigr) \right] .
\end{eqnarray}
Inserting into (\ref{Psi0betaint}) yields (\ref{pcf}).

For our purposes, it is important to understand the behavior of (\ref{IntPsi0beta}) for $\beta \rightarrow \infty$.
Given a fixed value of $0<r<\frac{\pi}{2}$, we get
\[
 a \sim \beta \sin^2(r), \quad v \sim \sqrt{2\beta} |\tilde{x}_2|, \quad w \sim \sqrt{2\beta} |\tilde{x}_2|, \quad v-w \sim -\frac{2^{\frac{3}{2}}}{\sqrt{\beta}} \cot(r) |\tilde{x}_2| ,
\]
and using the asymptotic expansion \cite{AS} [19.8.2], for $v(r),w(r) \gg \frac{1}{2}$,
\[
 V \bigl( -\tfrac{1}{2}, x) \sim \sqrt{\tfrac{2}{\pi}} e^{\frac{1}{4} x^2} x^{-1} \bigl( 1 +x^{-2} + \cdots \bigr) ,
\]
we get in the limit $\beta \rightarrow \infty$ for (\ref{IntPsi0beta}) the asymptotic expression
\[
 \sim \sqrt{a\pi} \left[ e^{\frac{1}{2} w^2} w^{-2} -e^{\frac{1}{2} v^2} v^{-2} +\cdots \right] 
\]
Now  let us consider a Taylor expansion of (\ref{IntPsi0beta}) with respect to $v-w$, keeping only those terms that do not vanish in the limit $\beta \rightarrow \infty$, i.e.
\begin{equation}
 e^{\frac{1}{2} w^2} w^{-2} -e^{\frac{1}{2} v^2} v^{-2} +\cdots
 \sim -\left[ w^{-1} (v-w) +\tfrac{1}{2} (v-w)^2 +\tfrac{1}{3!} w (v-w)^3 +\cdots \right] e^{\frac{1}{2} w^2} ,
\label{Taylorvmw}
\end{equation}
where we only keep terms up to ${\cal O}(\beta^{-1})$. For $|\tilde{x}_2|$ close to zero, we consider  only  the leading order term, i.e.
\[
 \Psi^{(0)}_{0,\beta}(\tilde{x}_2,\tilde{x}_2) \sim 4 \pi c_{12} \sqrt{2\beta} |\tilde{x}_2|^{-1} e^{-(\beta +1) |\tilde{x}_2|^2} \int_{0}^{\frac{\pi}{2}} w^{-1}(r) e^{\frac{1}{2} w^2(r)} \cos^2(r) dr ,
\]
and get in the limit $\beta \rightarrow \infty$
\begin{eqnarray*}
 \Psi^{(0)}_{0,\infty}(\tilde{x}_2,\tilde{x}_2)
 & \sim & 4 \pi c_{12} |\tilde{x}_2|^{-2} e^{-|\tilde{x}_2|^2} \int_{0}^{\frac{\pi}{2}} \cos^2(r) dr \\
 & \sim & \pi^2 c_{12} |\tilde{x}_2|^{-2} e^{-|\tilde{x}_2|^2} ,
\end{eqnarray*}
which corresponds to the asymptotic expression (\ref{betainf}). To illustrate the behavior of the Taylor expansion (\ref{Taylorvmw}), we have plotted in Fig.~\ref{fig1} the function $\Psi^{(0)}_{0,\beta}$  for $\beta=1000$, and approximations of $\Psi^{(0)}_{0,\infty}$ using Taylor polynomials (\ref{Taylorvmw}) of first, second, and third order, respectively. 

Finally, we consider the limit $|\tilde{x}_2| \rightarrow 0$ for large but finite values of $\beta$, i.e., $v,w \ll \frac{1}{2}$, using the Taylor approximation, cf.~\cite{AS} [19.3.6],
\[
 V\bigl( -\tfrac{1}{2}, x \bigr) \approx V\bigl( -\tfrac{1}{2}, 0 \bigr) +V'\bigl( -\tfrac{1}{2}, 0 \bigr) x = \sqrt{\tfrac{2}{\pi}} x ,
\]
which gives
\begin{multline*}
 \sqrt{a\pi} \left[ e^{\frac{1}{2} v^2(r)} -e^{\frac{1}{2} w^2(r)} \right] +\sqrt{\tfrac{a}{2}} \pi \left[ w(r) e^{\frac{1}{4} w^2(r)} V \bigl( -\tfrac{1}{2},w(r) \bigr) -v(r) e^{\frac{1}{4} v^2(r)} V \bigl( -\tfrac{1}{2},v(r) \bigr) \right] \\ \approx \tfrac{1}{2} \sqrt{a\pi} \bigl( w^2(r)-v^2(r) \bigr) .
\end{multline*}
With
\[
 w^2(r)-v^2(r)= 8 \beta |\tilde{x}_2| \frac{\sin(r)\cos(r)}{\beta \sin^2(r)+\cos^2(r)} ,
\]
the right hand side can be further approximated by
\[
 \tfrac{1}{2} \sqrt{a\pi} \bigl( w^2(r)-v^2(r) \bigr) \approx 4 \sqrt{\pi \beta} |\tilde{x}_2|^2 \cos(r) ,
\]
for fixed $r>0$ and sufficiently large $\beta$, which gives the limit (\ref{tildex0}). Roughly speaking, for finite values of $\beta$, the function $\Psi^{(0)}_{0,\beta}$ can be viewed as a regularization of the singular function $\Psi^{(0)}_{0,\infty}$. This behavior is illustrated in Fig.~\ref{fig1}, where $\Psi^{(0)}_{0,\beta}$, for $\beta=1000$, $2000$ and $3000$, has been compared with $\Psi^{(0)}_{0,\infty}$, obtained from the third-order Taylor polynomial (\ref{Taylorvmw}).
\end{proof}

\begin{figure}[p]
\begin{center}
\includegraphics[scale=0.7]{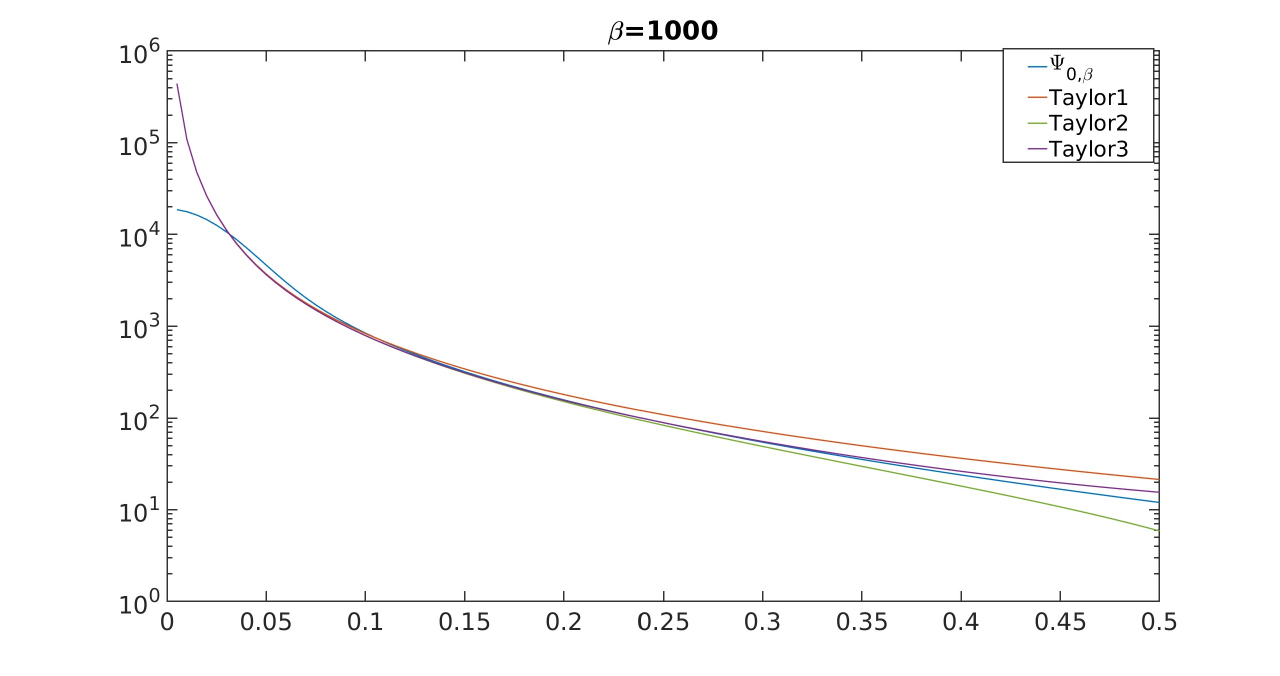} \vspace{0.3cm}
\includegraphics[scale=0.7]{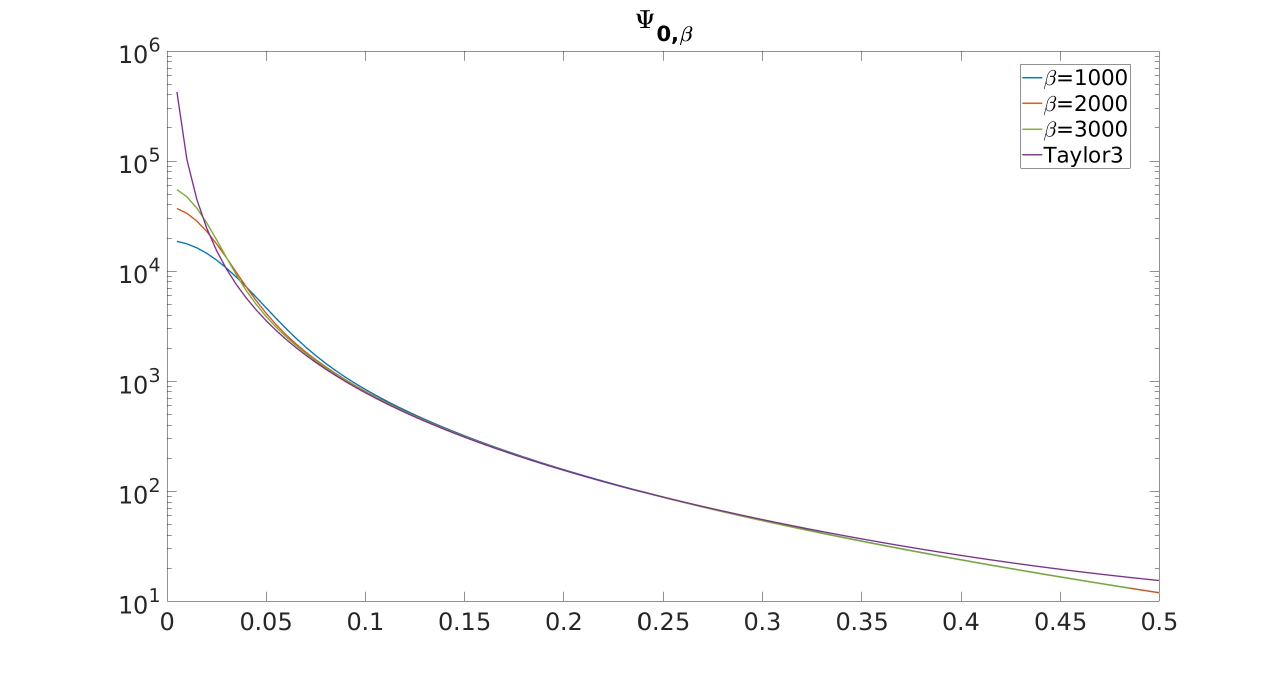}
\end{center}
\vspace{0.5cm}
\caption{
(top) The function $\Psi^{(0)}_{0,\beta}$ for $\beta=1000$, and approximations of $\Psi^{(0)}_{0,\infty}$ using Taylor polynomials (\ref{Taylorvmw}) of first, second, and third order, respectively.
(bottom) The functions $\Psi^{(0)}_{0,\beta}$, for $\beta=1000$, $2000$ and $3000$, compared to $\Psi^{(0)}_{0,\infty}$, obtained from the third order Taylor polynomial (\ref{Taylorvmw}).
}
\label{fig1}
\end{figure}

The previous lemma shows that it is possible to apply Lebesgue's dominated convergence theorem   in (\ref{I1ab}), i.e.
\begin{eqnarray*}
 I^{(0)}_1(a,0) & := & \lim_{\beta \rightarrow \infty} 4 \iint_{\mathbb{R}^3 \times \mathbb{R}^3} \frac{\delta_{a,\beta}(\tilde{x}_1) \phi(2\tilde{x}_2-\tilde{x}_1)}{|\tilde{x}_1-\tilde{x}_2|} \Psi^{(0)}_{0,\beta}(\tilde{x}_2,\tilde{x}_2) \, d\tilde{x}_1 d\tilde{x}_2 \\
 & = & 4 \iint_{\mathbb{R}^3 \times \mathbb{R}^3} \frac{\delta_{a}(\tilde{x}_1) \phi(2\tilde{x}_2-\tilde{x}_1)}{|\tilde{x}_1-\tilde{x}_2|} \Psi^{(0)}_{0,\infty}(\tilde{x}_2,\tilde{x}_2) \, d\tilde{x}_1 d\tilde{x}_2 \\
 & = & 4 \int_{\mathbb{R}^3} \frac{\phi(2\tilde{x}_2-a)}{|a-\tilde{x}_2|} \Psi^{(0)}_{0,\infty}(\tilde{x}_2,\tilde{x}_2) \, d\tilde{x}_1 .
\end{eqnarray*}
From (\ref{betainf}) in Lemma \ref{lemmaPsi0beta} we can derive the leading order singular asymptotic term by taking
\[
 \phi(2\tilde{x}_2-a) \Psi^{(0)}_{0,\infty}(\tilde{x}_2,\tilde{x}_2) \sim c_{12} \pi^2 |\tilde{x}_2|^{-2} e^{-|\tilde{x}_2|^{2}} ,
\]
and using Laplace's expansion of the Coulomb potential, i.e.
\begin{equation}
 \frac{1}{|a-\tilde{x}_2|} = \sum_{\ell=0}^{\infty} \frac{r_{<}^{\ell}}{r_{>}^{\ell+1}}
 \frac{4\pi}{2\ell+1}
 \sum_{m=-\ell}^{\ell} (-1)^{m} Y_{\ell,m}(\theta,\varphi) Y_{\ell,-m}(\tilde{\theta},\tilde{\varphi}) ,
\label{Laplacexp}
\end{equation}
with $r_{<} := \min \{|a|,|\tilde{x}_2|\}$ and $r_{>} := \max \{|a|,|\tilde{x}_2|\}$. This gives
\[
 I^{(0)}_1(a,0) \sim c_{12} 4\pi^3 \frac{1}{|a|} \int_{0}^{|a|} e^{-|\tilde{x}_2|^{2}} d|\tilde{x}_2| + c_{12} 4\pi^3 \int_{|a|}^{\infty} \frac{1}{|\tilde{x}_2|} e^{-|\tilde{x}_2|^{2}} d|\tilde{x}_2| .
\]
The two integrals correspond to transcendental functions with known asymptotic behavior, see \cite{AS}~[7.1.5,5.1.10], i.e.
\begin{equation}
 \frac{1}{|a|} \int_{0}^{|a|} e^{-|\tilde{x}_2|^{2}} d|\tilde{x}_2| = \frac{\sqrt{\pi}}{2|a|} \mbox{erf}(|a|) \sim 1 -\tfrac{1}{3} |a|^2 \cdots ,
\label{erfa}
\end{equation}
and
\begin{equation}
 \int_{|a|}^{\infty} \frac{1}{|\tilde{x}_2|} e^{-|\tilde{x}_2|^{2}} d|\tilde{x}_2| = \frac{1}{2} \int_{|a|^2}^{\infty} t^{-1} e^{-t} dt = \frac{1}{2} \mbox{E}_1(|a|^2) \sim -\frac{1}{2} \gamma -\ln(|a|) +\tfrac{1}{2} |a| -\tfrac{1}{8} |a|^2 +\cdots ,
\label{E1}
\end{equation}
where $\mbox{erf}$ and $\mbox{E}_1$ denote the error function and the exponential integral, respectively.
The integral (\ref{erfa}) is an analytic function in $|a|^2$ and therefore does not contribute to the singular asymptotic behavior, while the integral (\ref{E1}) has in leading order a singular term $\ln(|a|)$ in its asymptotic expansion.\footnote{Here $\gamma$ denotes the Euler-Mascheroni constant.}

The next term in the asymptotic expansion (\ref{Kbeta-asymp}) is given by
\begin{eqnarray*}
 I_2(a,b) & := & \lim_{\beta \rightarrow \infty} \frac{1}{2} \iint_{\mathbb{R}^3 \times \mathbb{R}^3} \delta_{a,\beta}(x_1) \phi(x_2) \phi \bigl( \tfrac{1}{2} (x_1+x_2) \bigr) \delta_{b,\beta} \bigl( \tfrac{1}{2} (x_1+x_2) \bigr) \, dx_1dx_2 \\
 & = & \lim_{\beta \rightarrow \infty} 4 \iint_{\mathbb{R}^3 \times \mathbb{R}^3} \delta_{a,\beta}(\tilde{x}_1) \phi(2\tilde{x}_2-\tilde{x}_1) \phi(\tilde{x}_2) \delta_{b,\beta}(\tilde{x}_2) \, d\tilde{x}_1d\tilde{x}_2 \\
 & = & 4 \phi(2b-a) \phi(b) ,
\end{eqnarray*}
and does not contribute to the singular asymptotics.

Let us comment briefly on the modification required for the exchange part of the $2p1h$ Green's function. The corresponding integral (\ref{Kbeta}) for the exchange part becomes 
\[
  K_{\beta}(a,b) := \lim_{\beta \rightarrow \infty} \iint_{\mathbb{R}^3 \times \mathbb{R}^3} \delta_{a,\beta}(x_1) \phi(x_2) \bigl( \hat{H}^{(0)}(\omega) \phi \otimes \delta_{b,\beta} \bigr) (x_1,x_2) \, dx_1dx_2 ,
\]
and in the asymptotic limit we get
\begin{equation}
  K_{\beta}(a,b) \sim \iint_{\mathbb{R}^3 \times \mathbb{R}^3} \iint_{\mathbb{R}^3 \times \mathbb{R}^3} \frac{\delta_{a,\beta}(x_1) \phi(x_2) \delta_{b,\beta}(x_3) \phi(x_4)}{|x_1-x_2|\bigl( |x_1-x_4|^2 +|x_2-x_3|^2 \bigr)^2|x_3-x_4|} \, dx_1dx_2dx_3dx_4 .
\label{Kbeta-ex}
\end{equation}
In the asymptotic expansion (\ref{P0tildePsi}) we consider only leading order contributions with zero relative angular momentum due to the presence of the projection operator $P_0$, which means that the corresponding asymptotic terms are symmetric with respect to a particle exchange. Therefore, the same reasoning applies to the exchange part and we get the same leading order term in the asymptotic expansion of (\ref{Kbeta-ex}). This leads to spin-dependent cancellations between the direct and exchange terms, similar to the so-called self-interaction correction in the Fock operator of the Hartree-Fock model. Higher order direct and exchange terms also have contributions from non-zero relative angular momenta and hence possesses different asymptotics. 

Finally, we summarize our discussion in the following lemma.

\begin{lemma}
\label{lemmaKab}
The kernel functions (\ref{sumjHjd}) and (\ref{sumjHje}), which belong to the partially contracted operator (\ref{Hj}) and its simplified variant (\ref{H0j}), have a singular asymptotic behavior, i.e., they are, modulo smooth terms, of the form
\[
 H_{d(x)}(x_1,x_2,\omega) \sim a_0(x_1,\omega) \ln(|x_1-x_2|) +a_1(x_1,\omega) |x_1-x_2| + \cdots ,
\]
where only terms with zero relative angular momentum are considered. Such kernel functions belong to the classical pseudo-differential operators of the H\"ormander class $S^{-3}(\mathbb{R}^3,\mathbb{R}^3)$.
\end{lemma}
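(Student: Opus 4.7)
The plan is to assemble the intermediate results developed throughout Section \ref{singularbehavior}. First I would reduce the operators $\hat{H}^{+}_d$ and $\hat{H}^{+}_x$ to the simplified operator $\hat{H}^{(0)}$ of (\ref{H0j}): applying the decomposition of $\Qvirt \otimes \Qvirt$ isolates the genuinely two-particle part, while the single-particle and finite-rank summands contribute only smooth or mean-field-type corrections, and the expansion (\ref{hv2hexpansion}) truncated after its first term captures everything except smoother remainders because the local potential $v$ is assumed smooth. Consequently it suffices to analyze the partial contraction of the kernel of $\hat{H}^{(0)}$ against the occupied orbital $\phi$ in the two inner variables.

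Next, realizing the partial trace as the $\beta \to \infty$ limit of the integrals (\ref{Kbeta})--(\ref{Kab}) against the Gaussian approximations $\delta_{a,\beta}$, $\delta_{b,\beta}$, I would invoke Proposition \ref{propositionRedge} together with the edge-parametrix expansion (\ref{P0tildePsi}) for $\Psi_{b,\beta}$. This yields the splitting $K_{\beta} \sim I_1 + I_2 + \mathcal{O}(|z_1|^2)$ displayed in (\ref{Kbeta-asymp}), where $I_2(a,b)$ reduces to a pointwise product of Gaussians and is therefore manifestly smooth, feeding only into the regular coefficients $a_j(x_1,\omega)$ with $j \geq 1$.

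The core of the argument is the asymptotic evaluation of $I^{(0)}_1(a,0)$. Lemma \ref{lemmaPsi0beta} provides, via the closed-form expression (\ref{pcf})--(\ref{dawson}) for $\Psi^{(0)}_{0,\beta}$, both the limiting profile (\ref{betainf}) and a uniform majorant along the diagonal, so Lebesgue's dominated convergence legitimises pulling the $\beta \to \infty$ limit inside the outer integral. Substituting the Laplace multipole expansion (\ref{Laplacexp}) for $|a - \tilde{x}_2|^{-1}$ and integrating over angles eliminates all but the $\ell = 0$ component, leaving the two radial integrals (\ref{erfa}) and (\ref{E1}). The error-function integral (\ref{erfa}) is analytic in $|a|^2$ and contributes only to the regular part of the expansion, whereas the exponential integral (\ref{E1}) supplies precisely the logarithmic singularity, thereby identifying $a_0(x_1,\omega)\ln|x_1 - x_2|$ as the leading term. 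The exchange kernel (\ref{sumjHje}) is handled by the same reasoning: since the $P_0$ projection retains only the zero relative angular momentum components, which are symmetric under particle interchange, the exchange contribution produces the same leading logarithm, modifying only the numerical constant. The Hörmander-class statement then follows from the standard identification $\mathcal{F}^{-1}(c|\xi|^{-3}) \sim \ln|x|$ modulo smooth functions in $\mathbb{R}^3$, so that a kernel of the form $a_0(x_1,\omega)\ln|x_1 - x_2|$ corresponds to a classical pseudo-differential operator with principal symbol homogeneous of order $-3$ in the dual covariable, with smooth $x_1$- and $\omega$-dependence through the coefficients.

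The main obstacle I anticipate is the uniform domination needed to apply the dominated convergence theorem at the step that replaces $\Psi^{(0)}_{0,\beta}$ by $\Psi^{(0)}_{0,\infty}$: the family $\Psi^{(0)}_{0,\beta}$ grows like $\beta$ at the exact diagonal, as shown in (\ref{tildex0}), but is effectively concentrated on a ball of radius $\sim \beta^{-1/2}$, so one has to balance this concentration against the outer Coulomb factor $|\tilde{x}_1 - \tilde{x}_2|^{-1}$ and the Gaussian tail $\phi(2\tilde{x}_2 - \tilde{x}_1)$ before the explicit expression from Lemma \ref{lemmaPsi0beta} can be invoked as an integrable envelope. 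A secondary technical point is that the parametrix expansion (\ref{P0tildePsi}) is only the leading order of an infinite asymptotic series; one must verify that the higher-order terms in the parametrix construction, and the Green-operator remainders from Lemma \ref{lemmaRcone}, feed only into the $a_j|x_1 - x_2|^j$ terms with $j \geq 1$ and hence do not disturb the identification of the logarithmic leading singularity.
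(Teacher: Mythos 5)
Your proposal reconstructs essentially the same argument the paper uses: the reduction of (\ref{Hj}) to (\ref{H0j}) via the $\Qvirt\otimes\Qvirt$ decomposition and (\ref{hv2hexpansion}), the Gaussian regularization (\ref{Kbeta})--(\ref{Kab}) combined with the parametrix expansion (\ref{P0tildePsi}), the splitting into $I_1$ and the smooth $I_2$, the use of Lemma \ref{lemmaPsi0beta} with dominated convergence, the Laplace expansion leading to (\ref{erfa}) and (\ref{E1}) and hence the logarithmic leading term, and the $P_0$-symmetry argument for the exchange kernel. The only addition is your explicit Fourier-transform justification of the $S^{-3}$ classification, which the paper merely asserts; this is a correct and welcome supplement rather than a different route.
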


\section{Asymptotic smoothness of Feynman diagrams and outlook}
\label{asymptoticsmoothness}
So far, we determined the asymptotic behavior of one-particle Green's functions and the singular behavior of partially contracted kernel functions. Now, let us shortly discuss how this approach can be applied to Feynman diagrams to give a new refined classification system for them that contains besides the order of perturbation also the asymptotic smoothness. In the future, this may be used to design new sparse grid combination methods \cite{GSZ92,GH2014, RG, GS24} in a multiscale fashion with improved computational complexities. 

Conventionally, it is common practice to classify Feynman diagrams according to their number of interaction lines, corresponding to the order of perturbation theory in which they appear. Such a classification scheme roughly corresponds to the computational complexity of numerical simulations for the diagramms. Now, following our previous discussion, we refine the classification of Feynman diagrams by introducing the asymptotic smoothness as an additional order parameter. For this purpose, we decompose a Feynman diagram with two external lines into a singular part with a fast decay at infinity and a smooth remainder, i.e.
\begin{equation}
 F(x,\tilde{x},\omega) = F_s(x,\tilde{x},\omega) +F_{\infty}(x,\tilde{x},\omega) ,
\label{Fdecomp}
\end{equation}
with $F_{\infty} \in C^{\infty}(\mathbb{R}^3\times \mathbb{R}^3 \times \mathbb{R})$ such that the singular part $F_s$ satisfies the following definition.

\begin{definition}
	A Feynman diagram $F(x,\tilde{x},\omega)$ with two external lines $x$ and $\tilde{x}$, where $\omega \in \mathbb{R}$ is considered as parameter, has asymptotic smoothness of order $p$ if it belongs to $C^{\infty}(\mathbb{R}^3\times \mathbb{R}^3 \setminus \{ 0 \})$, for any $\omega \in \mathbb{R}$, and its singular part $F_s$ satisfies the asymptotic smoothness property\footnote{In this definition, the fast asymptotic decay of $F_s$ at large distances, i.e.~for $|x-\tilde{x}| \rightarrow \infty|$, is triggered by the parameter $N$.}
\begin{equation}
\left| \partial_{x}^{\alpha} \partial_{\tilde{x}}^{\beta}
F_s(x,\tilde{x},\omega) \right| \leq C_{\alpha,\beta,N,\omega} |x-\tilde{x}|^{-3-p-|\alpha|-|\beta|-N}
\label{asympestimate}
\end{equation}
for $-3-p-|\alpha|-|\beta|-N  <0$, and each $N \in \mathbb{N}_0$,
where it has bounded partial derivatives for $|\alpha|+|\beta| \leq -3-p$. Note that the constant $C_{\alpha,\beta,N,\omega} >0$  in (\ref{asympestimate}) may depend on $\omega$. 
\label{def1}
\end{definition}

Such a decomposition can be achieved, for example, by using an appropriate cutoff function perpendicular to the diagonal.     
The previous definition is motivated by the corresponding property of kernel functions of pseudo-differential operators of order $p<0$, see~\cite{Stein} for further details. 

To illustrate the benefit of our two-parameter classification scheme for Feynman diagrams, consider the one-particle Green's function discussed in the previous sections. For the corresponding Feynman diagrams, we observe a correlation  between their order in perturbation theory, i.e.~the number of interaction lines, and their asymptotic smoothness, which is schematically shown in Fig.~\ref{fig4}. Regarding the asymptotic smoothness of Feynman diagrams in higher orders of perturbation theory, our results are still incomplete. However, we have at least outlined a general approach that can be extended to higher orders. 
\begin{figure}[t]
\begin{center}
\includegraphics[scale=0.23]{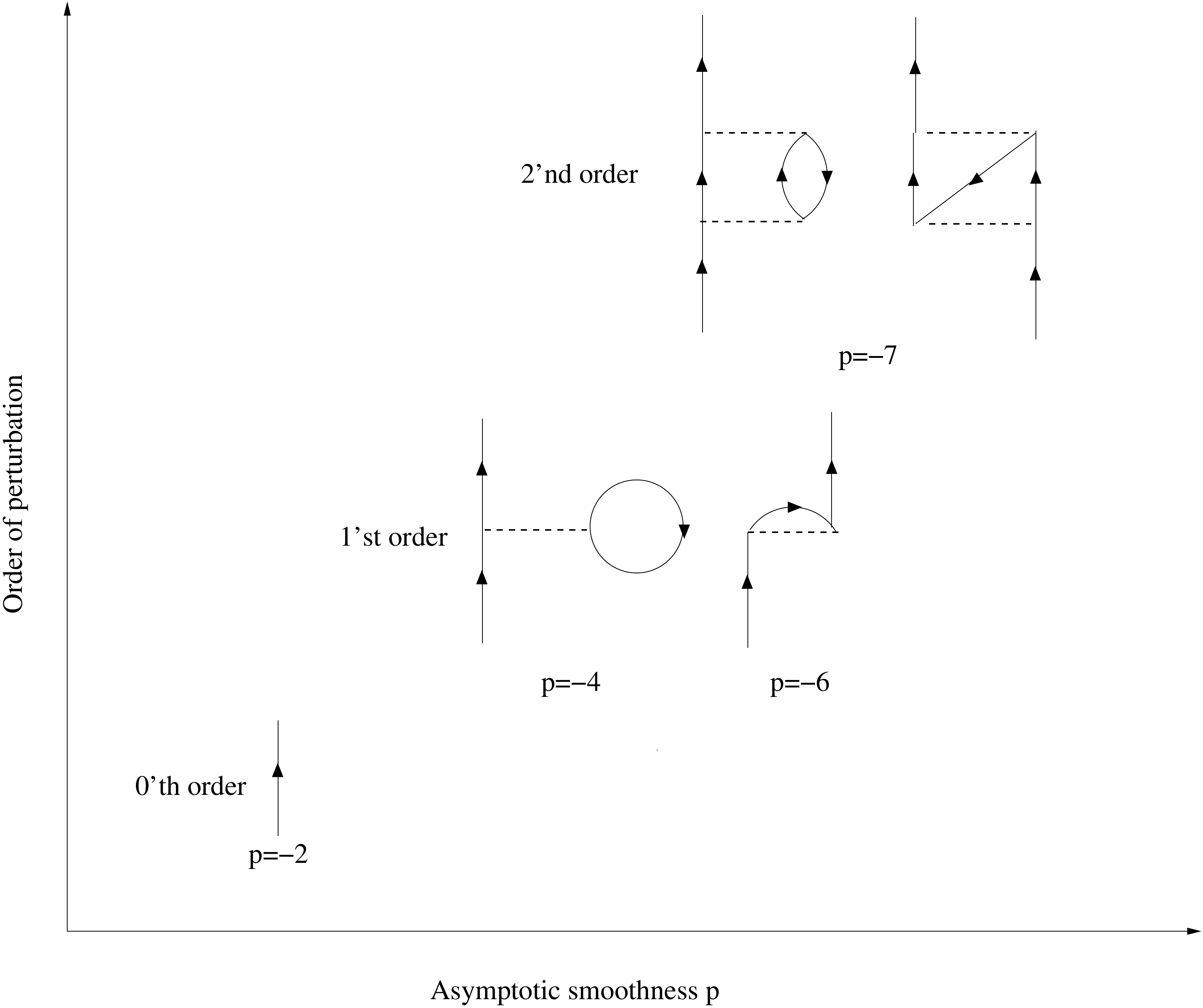}
\end{center}
\vspace{0.5cm}
\caption{Asymptotic smoothness versus order of perturbation theory for the singular part $F_s$ of certain Feynman diagrams contributing to the one-particle Green's function. The asymptotic smoothness parameter $p$ was given in Definition \ref{def1}.}
\label{fig4}
\end{figure}

Now, to take advantage of such a correlation, we suggest to design a sparse grid combination method along the lines of \cite{GH2014,GSZ92, GS24,RG} that correlates asymptotic smoothness with a hierarchical tensor product approximation scheme. In particular, the order of asymptotic smoothness of a Feynman diagram entails its approximation properties\footnote{Indeed it determines the Besov and mixed Sobolev regularity of a Feynman diagram.} with respect to hyperbolic cross approximation schemes in tensor product bases, cf.~\cite{FHS07,FFH21,Yserentant} for further details.
Thus, instead of performing a single numerical calculation at the highest order of perturbation theory and with the largest possible basis set, the combination technique decomposes the calculation into a telescopic sequence and restricts actual calculations to a hyperbolic cross/sparse grid with respect to perturbation order and basis set size. Such an approach can also be viewed as a two-variate extrapolation method where one direction corresponds to the order of perturbation and the other direction corresponds to the asymptotic smoothness. The development, the error analysis, and the cost complexity estimation of such a multiscale approach will be future work.

In conclusion let us make the following remarks: Despite good progress in the accuracy and efficient implementation of many-particle models, their applicability is severely limited by the presence of singularities. While there is a solid knowledge of the effects of singularities for wavefunction-based methods, much less is known for reduced quantities such as Green's and response functions. This is partly due to the fact that these quantities are not easily accessible via the Schr\"odinger equation. Instead, their very definitions involve concepts of many-particle theory, such as second quantization, which are difficult to put into a rigorous mathematical framework. The present work is a first attempt to unveil the singular structure of dynamical reduced quantities by means of a case study for the one-particle Green's function. To this end, we applied techniques from singular analysis, which turned out to be useful for our purposes. Besides explicit asymptotic expansions concerning the small distance behavior of the Green's function, we derived a refined classification of the corresponding Feynman diagrams, taking into account their asymptotic smoothness near the diagonal.  
Our study of the small distance behavior of Green's functions is reminiscent of related work in quantum field theory.
While a proper treatment of singularities via regularization and renormalization is crucial there, this issue seems quite irrelevant in electronic structure theory.    
However, a more careful consideration reveals the tight interplay between singular structures and computational complexity.
The extended classification scheme discussed above provides a first step in this direction and
the computational complexity of numerical algorithms for many-particle models may get improved by taking into account their singular structure. 

\section*{Acknowledgments}
The authors HJF and MG were supported by the \emph{Hausdorff Center for Mathematics} (HCM) in Bonn, funded by the Deutsche Forschungsgemeinschaft (DFG, German Research Foundation) under Germany's Excellence Strategy -- EXC-2047/1 -- 390685813 and by the CRC 1060 \emph{The Mathematics of Emergent Effects} -- 211504053 of the Deutsche Forschungsgemeinschaft.

\vspace{1cm}
\appendix
\noindent {\bf \Large Appendix}

\section{Partial contractions of kernel functions}
\label{partialcontraction}
The calculation of the integral, 
\[
 I_{\beta}(a,b) :=\int_{\mathbb{R}^3} \frac{\delta_{a,\beta}(x_1,x_2) \otimes \phi(x_3) \cdot \delta_{a,\beta}(\tilde{x}_1,\tilde{x}_2) \otimes \phi(\tilde{x}_3)}{|x-\tilde{x}|} dxd\tilde{x} ,
\]
can be done in several steps. Using (\ref{CG}) and the Gaussian product formula, cf.~\cite{HJO},
\[
 e^{-\alpha|x-a|^2} e^{-\beta|x-b|^2} = e^{-\frac{\alpha \beta}{\alpha+\beta} |a-b|^2} e^{-(\alpha+\beta) \left|x-\frac{\alpha a + \beta b}{\alpha+\beta} \right|^2} ,
\]  
we get in a first step
\[
 \int_{\mathbb{R}^2} e^{-\beta|x_1-a|^2} e^{-|x_1-x_2|^2t^2} \, dx_1 = \tfrac{\pi}{\beta+t^2} e^{-\frac{\beta t^2}{\beta+t^2} |x_2-a|^2} ,
\]
and in a second step
\[
 \int_{\mathbb{R}^2} e^{-\frac{\beta t^2}{\beta+t^2} |x_2-a|^2} e^{-\beta|x_2-b|^2} \, dx_2 = \tfrac{\pi}{\beta+\frac{\beta t^2}{\beta+t^2}} e^{-\frac{\beta t^2}{\beta+2t^2} |a-b|^2} .
\]
These integrals, together with (\ref{It}),  yield
\[
 I_{\beta}(a,b) = \int_{-\infty}^{\infty} \sqrt{\tfrac{\pi}{1+2t^2}} \tfrac{\beta}{\beta+2t^2} e^{-\frac{\beta t^2}{\beta+2t^2} |a-b|^2} \, dt
\] 
Now, substituting 
\[
 u^2=\frac{\beta t^2}{\beta+2t^2} ,
\]
for a sufficiently large $\beta$, we obtain 
\[
 I_{\beta}(a,b) = 2 \int_0^{\sqrt{\frac{\beta}{2}}} \sqrt{\tfrac{\pi}{1+2\left (1-\frac{1}{\beta} \right)u^2}} e^{-u^2|a-b|^2} .
\]
Finally, taking the limit $\beta \rightarrow \infty$, we get 
\begin{eqnarray*}
 \lim_{\beta \rightarrow \infty} I_{\beta}(a,b) & = & \lim_{\beta \rightarrow \infty} \left[ 2 \int_0^{\infty} \sqrt{\tfrac{\pi}{1+2\left (1-\frac{1}{\beta} \right)u^2}} e^{-u^2|a-b|^2} \, du - 2 \int_{\sqrt{\frac{\beta}{2}}}^{\infty} \sqrt{\tfrac{\pi}{1+2\left (1-\frac{1}{\beta} \right)u^2}} e^{-u^2|a-b|^2} \, du \right] \\
 & = & 2 \int_0^{\infty} \sqrt{\tfrac{\pi}{1+2u^2}} e^{-u^2|a-b|^2} \, du \\
  & = & \sqrt{\frac{\pi}{2}} e^{\frac{|a-b|^2}{4}} K_0 \left( \frac{|a-b|^2}{4} \right) ,
\end{eqnarray*}
and, up to a prefactor, we just recovered (\ref{K2d}).

\clearpage


\begin{thebibliography}{999}
{\small
\setlength{\parskip}{-0.1cm}
\bibitem{AS}
M.~Abramowitz and I.~Stegun.
\newblock {\em Handbook of Mathematical Functions}.
\newblock National Bureau of Standards, Applied Mathematics Series - 55, 1972.

\bibitem{Agmon75} 
S.~Agmon.
\newblock Spectral properties of Schr\"odinger operators and scattering theory.
\newblock {\em Ann. Scu. Norm. Sup. Pisa Cl Sci II} {\bf 2}, 151--218 (1975).

\bibitem{AK92} 
S.~Agmon and M.~Klein.
\newblock Analyticity properties in scattering and spectral theory for Schr\"odinger operators with long-range radial potentials.
\newblock {\em Duke Math. J.} {\bf 68}, 337--399 (1992).

\bibitem{B88}
C.~Brislawn.
\newblock Kernels of trace class operators.
\newblock {\em Proc. Amer. Math. Soc.} {\bf 104}, 1181--1190 (1988).

\bibitem{CGS16}
E.~Canc\'es, D.~Gontier and G.~Stoltz.
\newblock A mathematical analysis of the $GW^0$ method for computing electronic excited energies of molecules.
\newblock {\em Rev. Math. Phys.} {\bf 28}, 1650008 (2016).

\bibitem{ES97} 
Y.~Egorov and B.-W.~Schulze.
\newblock {\em Pseudo-Differential Operators, Singularities, Applications.}
\newblock Birkh\"auser, Basel, 1997.

\bibitem{FHS07}
H.-J.~Flad, W.~Hackbusch, and R.~Schneider.
\newblock Best $N$-term approximation in electronic structure calculations.
II. Jastrow factors.
\newblock ESAIM: M2AN {\bf 41}, 261--279 (2007).

\bibitem{FH10} 
H.-J.~Flad and G.~Harutyunyan.
\newblock Ellipticity of quantum mechanical Hamiltonians in the edge algebra.
\newblock {\em Discrete and continuous dynamical systems, Supplement} 2011, 420--429.

\bibitem{FHSS11} 
H.-J.~Flad, G.~Harutyunyan, R.~Schneider and B.-W.~Schulze
\newblock Explicit Green operators for quantum mechanical Hamiltonians.~I.~The hydrogen atom.
\newblock {\em manuscripta math.} {\bf 135}, 497-519 (2011).

\bibitem{FHS15} 
H.-J.~Flad, G.~Harutyunyan and B.-W.~Schulze. 
\newblock Singular analysis and coupled cluster theory.
\newblock PCCP {\bf 17}, 31530--31541 (2015).

\bibitem{FHS16} 
H.-J.~Flad, G.~Harutyunyan and B.-W.~Schulze.
\newblock Asymptotic parametrices of elliptic edge operators
\newblock {\em J. Pseudo-Differ. Oper. Appl.} {\bf 7}, 321--363 (2016).

\bibitem{FFH17} 
H.-J.~Flad and G.~Flad-Harutyunyan.
\newblock Singular analysis of RPA diagrams in coupled cluster theory.
\newblock Preprint: arXiv:1801.07573 [math-ph]

\bibitem{FFHS20} 
H.-J.~Flad, G.~Flad-Harutyunyan and B.-W.~Schulze.
\newblock Explicit Green operators for quantum mechanical Hamiltonians. II. Edge-type singularities of the helium atom.
\newblock {\em Asian-Eur. J. Math.} {\bf 13}, 2050122 (2020), pp.~64.

\bibitem{FFH21} 
H.-J. Flad and G. Flad-Harutyunyan.
\newblock Sparse grids approximation of Goldstone diagrams in electronic structure calculations
\newblock in {\em Sparse Grids and Applications-Munich 2018}, H.-J.~Bungartz, J.~Garcke and D.~Pfl\"uger, Editors.  
{\em Lecture Notes in Computational Science and Engineering} {\bf 144}, 33--51, Springer 2021.

\bibitem{FFH23} 
H.-J.~Flad and G.~Flad-Harutyunyan.
\newblock Fundamental solutions and Green's functions for certain elliptic differential operators from a pseudo-differential algebra.
\newblock Preprint: arXiv:2312.08835 [math-ph]

\bibitem{FSS08}
H.-J.~Flad, R.~Schneider and B.-W.~Schulze.
\newblock Asymptotic regularity of solutions of Hartree-Fock equations with Coulomb potential.
\newblock {\em Math. Methods Appl. Sci.} {\bf 31}, 2172--2201 (2008).

\bibitem{Fulde} 
P.~Fulde.
\newblock {\em Electron Correlations in Molecules and Solids, 2nd ed.}
\newblock Springer, Berlin, 1993.

\bibitem{FHO2S07}
S.~Fournais, M.~Hoffmann-Ostenhof, T.~Hoffmann-Ostenhof and T.~{\O}stergaard S{\o}rensen.
\newblock Non-isotropic cusp conditions and regularity of the electron density of molecules at the nuclei.
\newblock {\em Ann. Henri Poincar\'e} {\bf 8}, 731--748 (2007).

\bibitem{FHO2S09}
S.~Fournais, M.~Hoffmann-Ostenhof, T.~Hoffmann-Ostenhof and T.~{\O}stergaard S{\o}rensen. 
\newblock Analytic structure of many-body Coulombic wave functions.
\newblock {\em Commun. Math. Phys.} {\bf 289}, 291--310 (2009).

\bibitem{FS21}
S.~Fournais and T.~{\O}stergaard S{\o}rensen. 
\newblock Estimates on derivatives of Coulombic wave functions and their electron densities.
\newblock {\em J. reine angew. Math.} {\bf 775}, 1--38 (2021).

\bibitem{FM68} 
L.~Frantz and R.~Mills.
\newblock Many body basis for the optical model.
\newblock {\em Nucl. Phys.} {\bf 15}, 16--32 (1968).

\bibitem{NSC84}
W.~von Niessen, J.~Schirmer and L.~Cederbaum.
\newblock Computational methods for the one-particle Green's function.
\newblock {\em Nucl. Phys.} {\bf 15}, 16--32 (1968).

\bibitem{GT98} 
D.~Gilbarg and N.~Trudinger
\newblock {\em Elliptic Partial Differential Equations of Second Order.}
\newblock Springer, Berlin, 1998.

\bibitem{GR07} 
I.~Gradshteyn and I.~Ryzhik. 
\newblock {\em Table of Integrals, Series, and Products, Seventh Edition}
\newblock A.~Jeffrey, D.~Zwillinger, Editors, Academic Press, San Diego, 2007.

\bibitem{GH2014}
M.~Griebel and H.~Harbrecht.
\newblock On the convergence of the combination technique,
\newblock in  {\em Sparse grids and Applications},
  \newblock Lecture Notes in Computational Science and Engineering, Vol 97, 55--74, Springer, 2014.

\bibitem{GSZ92}
 M.~Griebel and M.~Schneider and C.~Zenger.
\newblock {A combination technique for the solution of sparse grid problems.}
\newblock in {\em Iterative Methods in Linear Algebra}. P. de Groen and R. Beauwens (eds.),
  263--281,  IMACS, Elsevier, North Holland, 1992.

\bibitem{GS24}
M.~Griebel and U.~Seidler. 
\newblock A dimension-adaptive combination technique for uncertainty quantification. 
\newblock {\em International Journal for Uncertainty Quantification} {\bf 14(2)}, 21--43, (2024). 

\bibitem{HS08}
G.~Harutyunyan and B.-W.~Schulze.
\newblock {\em Elliptic Mixed, Transmission and Singular Crack Problems.}
\newblock European Math. Soc., Z\"urich, 2008.

\bibitem{HJO}
T.~Helgaker, P.~J{\o}rgensen and J.~Olsen.
\newblock {\em Molecular Electronic-Structure Theory.}
\newblock Wiley, New York, 1999.

\bibitem{KMTV06}
W.~Klopper, F.~Manby, S.~Ten-No, E.~Valeev.
\newblock R12 methods in explicitly correlated molecular electronic structure theory.
\newblock {\em Int. Rev. Phys. Chem.} {\bf 25}, 427--468 (2006).

\bibitem{MS07}
V.~Maz'ya and G. Schmidt.
\newblock {\em Approximate Approximations} (Mathematical Surveys and Monographs, 141)
\newblock American Mathematical Society, 2007.

\bibitem{NO88} 
J.~Negele and H.~Orland.
\newblock {\em Quantum Many-Particle Systems}
\newblock Addison-Wesley, Reading MA, 1988.

\bibitem{NS92}
M.~Nooijen and J.~Snijders.
\newblock {\em Coupled cluster approach to the single-particle Green's function.}
\newblock Int. J. Quantum Chem., Quantum Chem. Symp. {\bf 26} (1992) 55-83.

\bibitem{OT16}
Y.~Ohnishi and S.~Ten-no.
\newblock Explicitly correlated frequency-independent second-order Green's function for accurate ionization energies.
\newblock {\em J. Comp. Chem.} {\bf 37}, 2447-2453 (2016).

\bibitem{ORR02} 
G.~Onida, L.~Reining and A.~Rubio.
\newblock Electronic excitations: density-functional versus many-body Green's function approaches.
\newblock {\em Rev. Mod. Phys.} {\bf 74}, 601--659 (2002).

\bibitem{PPOV17}
F.~Pavo\v{s}evi\'{c}, C.~Peng, J.~Ortiz and E.~Valeev.
\newblock Communication: Explicitly correlated formalism for second-order single-particle Green’s function.
\newblock {\em J. Chem. Phys.} {\bf 147}, 121101 (2017), pp.~4.

\bibitem{RG}
A.~R\"{u}ttgers and M.~Griebel.
\newblock {Multiscale simulation of polymeric fluids using the sparse grid combination technique}.
  \newblock {\em Applied Mathematics and Computation} {\bf 319}, {425--443} {2018}.

\bibitem{Schulze98} 
B.-W.~Schulze.
\newblock {\em Boundary Value Problems and Singular Pseudo-Differential Operators.}
\newblock Wiley, New York, 1998.
           
\bibitem{Schwartz} 
L.~Schwartz. 
\newblock {\em Th\'{e}orie des Distributions}
\newblock Hermann, Paris, 1978.

\bibitem{Stein}
E.~Stein.
\newblock {\em Harmonic Analysis: Real-Variable Methods, Orthogonality, and Oscillatory Integrals}
\newblock Princeton University Press, Princeton, 1993.

\bibitem{TPPV19}
N.~Teke, F.~Pavo\v{s}evi\'{c}, C.~Peng and E.~Valeev.
\newblock Explicitly correlated renormalized second-order Green’s function for accurate ionization potentials of closed-shell molecules.
\newblock {\em J. Chem. Phys.} {\bf 150}, 214103 (2019), pp.~7.

\bibitem{Weinb60} 
S.~Weinberg.
\newblock High-energy behaviour in quantum field theory.
\newblock {\em Phys. Rev.} {\bf 118}, 838-849 (1960).

\bibitem{Yserentant} 
H.~Yserentant.
\newblock {\em Regularity and Aproximability of Electronic Wave Functions}, Lecture Notes in Mathematics {\bf 2000}.
\newblock Springer, Berlin, 2010.
}
\end{thebibliography}
\end{document}